\newtheorem{theorem}{Theorem}
\newcolumntype{C}[1]{>{\centering\arraybackslash}p{#1}}
\newcolumntype{L}{>{\raggedright\arraybackslash}X}
\newcommand{\khnote}[1]{{\bf\color{red}[ #1 -- Kaiyang ]}}
\newrobustcmd{\B}{\bfseries}
\theoremstyle{remark}
\newtheorem{remark}{Remark}
\begin{document}

\title{A Heterogeneous Multiscale Method for Efficient Simulation of Power Systems with Inverter-Based Resources}

\author{Kaiyang Huang,~\IEEEmembership{Student Member,~IEEE, }Min Xiong,~\IEEEmembership{Member,~IEEE, }Yang Liu,~\IEEEmembership{Member,~IEEE, }

Kai Sun,~\IEEEmembership{Fellow,~IEEE. }
\thanks{This work was supported in part by NSF grant ECCS-2329924 and in part by the ISSE seed grant of the University of Tennessee, Knoxville.\textit{(Corresponding author: Kai Sun.)}}
\thanks{K. Huang, M. Xiong, Y. Liu and K. Sun are with the Department of Electrical Engineering and Computer
Science, University of Tennessee, Knoxville, TN 37996 USA (e-mail: khuang12@vols.utk.edu; mxiong3@vols.utk.edu; yang.powersystems@gmail.com; kaisun@utk.edu).}}

\markboth{Accepted by IEEE Transactions on Power Systems (DOI: 10.1109/TPWRS.2025.3539567)}%
{Shell \MakeLowercase{\textit{et al.}}: A Sample Article Using IEEEtran.cls for IEEE Journals}


\maketitle

\begin{abstract}
As inverter-based resources (IBRs) penetrate power systems, the dynamics become more complex, exhibiting multiple timescales, including electromagnetic transient (EMT) dynamics of power electronic controllers and electromechanical dynamics of synchronous generators. Consequently, the power system model becomes highly stiff, posing a challenge for efficient simulation using existing methods that focus on dynamics within a single timescale. This paper proposes a Heterogeneous Multiscale Method for highly efficient multi-timescale simulation of a power system represented by its EMT model. The new method alternates between the microscopic EMT model of the system and an automatically reduced macroscopic model, varying the step size accordingly to achieve significant acceleration while maintaining accuracy in both fast and slow dynamics of interests. It also incorporates a semi-analytical solution method to enable a more adaptive variable-step mechanism. The new simulation method is illustrated using a two-area system and is then tested on a detailed EMT model of the IEEE 39-bus system.

\end{abstract}

\begin{IEEEkeywords}
Heterogeneous Multiscale Method, differential transformation, variable-step solver, electromagnetic transient simulation, time-domain simulation, inverter-based resource.
\end{IEEEkeywords}

\section{Introduction}
\IEEEPARstart{I}{nverter}-based resources (IBRs), including renewables and energy storage systems, are rapidly
increasing their presence in power grids worldwide for low-carbon, clean electricity generation. This trend is challenging the traditional dominance of conventional power plants. However, as IBRs become more prevalent in power grids, the analysis and control of power grid dynamics become more complex due to their increased timescales, as well as the high dimensionality and nonlinearity of power grid models. During contingencies, power grid dynamics can span a wide spectrum, from electromagnetic transient (EMT) dynamics with power electronic controllers of IBRs in microseconds to quasi-steady-state grid behaviors in minutes or hours to balance demand and supply against uncertainties of renewables. These dynamics are respectively 1,000-10,000 times faster and slower than transient stability related electromechanical dynamics of synchronous generators\cite{Hatziargyriou2021}. Time-domain simulation is the most important tool for assessing the stability and reliability of a power grid against threats from multiple timescales. However, simulating a real-world power grid with high
    IBR penetration poses significant challenges to the accuracy and efficiency requirements. If its dynamics of all timescales, such as EMT dynamics and electro-mechanical oscillations, are taken into account, this will result in a large number of nonlinear ordinary differential equations (ODEs) with a high stiffness ratio. Thus, a step size in microseconds is often required for numerical stability. Additionally, simulating a series of quasi-steady-state behaviors of the grid mixed with inverter controls and inter-area oscillations for an extended period of tens of minutes will result in an extremely slow simulation due to a huge computational burden.
Therefore, the best practice of multi-timescale simulations today for power system engineers is still co-simulation, which defines interfaces between EMT models, transient stability models, and quasi-steady-state models and co-simulates them iteratively using different simulation tools, with each tool focusing on a portion of the grid in one timescale while reducing the rest\cite{Qiuhua2016,Qiuhua2018}. However, it is still an open question how to determine the interface boundaries. Furthermore, the overall convergence, accuracy, and time performance for such co-simulations still pose major technical challenges\cite{Marandi2009}.
\par 
Fully detailed simulations such as EMT simulations of an interconnected power grid as a single, nonlinear stiff system to capture its dynamics in all timescales are highly complex tasks\cite{Lin2009}. In practice, power system engineers typically focus the modeling and simulating a utility-scale grid on a specific timescale for, e.g., EMT controls, transient stability and quasi-steady-state analysis, depending on the purpose of the study.When EMT simulations are required for the integration studies of IBRs, a reduced EMT model can be offline developed, which maintains details only for a specific region of concern, while the rest of the grid is either equivalenced or ignored beyond its boundary \cite{Marandi2009,Lin2009,Yi2013}. 
Simulating of a power system using a manually reduced model pose accuracy concerns. Conventionally, models for longer timescale simulations \cite{Kai2006} are generated manually from circuit-based EMT models by developers. The structures and parameters of these models are pending identification and validation using field event data, which, if inaccurate, can lead to incorrect simulation results. For power systems with high-penetration of IBRs, model reduction becomes even more challenging due to two factors. First, a phasor-based postive-sequence model is mainly used in transient stability simulations becomes less accurate, even for some not very fast dynamics, particularly when IBRs induce high-frequency oscillations in the sub-synchronous frequency range (e.g. 5-60 Hz). These oscillations simulated on a phasor-based model can manifest a significant sacrifice in accuracy \cite{Hatziargyriou2021}. Second, some IBR models provided by the developers are black box models encoded to protect intellectual properties, which are hard to reduce using a traditional model reduction method for mathematical equations.

For efficient and accurate simulations of a power system with IBRs, this paper proposes a multi-timescale simulation approach built upon the Heterogeneous Multiscale Method (HMM) \cite{Engquist2005HeterogeneousMM}, which provides a general framework for designing multiscale algorithms to model and simulate complex, stiff nonlinear dynamical systems. Over the past two decades, the HMM has demonstrated its efficacy in simulating complex systems across various fields of science and engineering, particularly when macroscopic behaviors are of major interest but only microscopic models are available or accurate \cite{Engquist2005HeterogeneousMM,Ariel2008AMM}. 
Unlike \cite{Engquist2005HeterogeneousMM}, this paper for the first time proposes a novel HMM framework along with its theoretical foundation and a scalable simulation approach for efficient two-timescale simulations of a power system represented in its EMT model. The proposed method enables an automatic, case-specific reduction of a detailed microscopic EMT model into a macroscopic model during the simulation, significantly improving computational efficiency while preserving essential dynamics. The macro-state in the proposed HMM framework is linked to the original micro-state via two transformations with a flexible dimension, which is lower than or equal to that of the micro-state. Additionally, the framework introduces a kernel-based convolution method to approximate macro-dynamics from micro-dynamics without requiring an explicitly reduced macro-model, enabling smooth and robust transitions between the micro-process and macro-process. Furthermore, the paper establishes the theoretical foundation of the method, uncovering relationships between simulation accuracy, computational complexity, and speedup through rigorous theorems. The framework also introduces variable-step algorithms to simulate both micro-state and macro-state efficiently. By leveraging a semi-analytical solution (SAS) method, a new theorem ensures the accurate capture of critical EMT dynamics while dynamically skipping unimportant microscopic dynamics for maximum acceleration in micro-state simulations.

\par The rest of the paper is organized as follows: Section II briefly introduces a generic HMM framework and then provides a detailed presentation of the proposed HMM approach for efficient two-timescale simulations of power systems represented in EMT models. The section includes the main idea and steps, the overall flowchart of the approach, the power system model considered in this paper, detailed algorithms for the micro-process and macro-process of the approach on the power system model, and a discussion of the main differences from other multi-timescale methods for power system simulations. Section III validates the proposed HMM approach on three test systems represented in their EMT models: a two-area system, the IEEE 39-bus system, and a 390-bus large system, demonstrating the performance and scalability of the proposed approach. Finally, Section IV presents the conclusions.

\color{black}
\section{Heterogenous Multiscale Method for Simulating Power Systems in EMT Models}
\subsection{Introduction of a generic HMM}
Consider a stiff ODE system of two or more timescales:
\begin{equation}
\label{stiff_ode}
    \dot{ \mathbf{x}}=\mathbf{f}\left(\mathbf{x}, t\right)
    \Longleftrightarrow \left\{\begin{array}{l}\dot{\mathbf{x}}^I=\mathbf{f}^I(\mathbf{x}^{I}, \mathbf{x}^{II},t) \\ \boldsymbol{\epsilon} \dot{\mathbf{x}}^{II}=\mathbf{f}^{I I}(\mathbf{x}^{I}, \mathbf{x}^{II},t)\end{array}\right.\\,
   \vspace{-2mm}
\end{equation}
where $\mathbf{x}(t): \mathbb{R}^{+} \mapsto \mathbb{R}^{n_x}$ is the state variable vector, and $n_x$ is the dimension of the system. The system has dynamics of a fast timescale so that the system can be represented as a two-timescale system with a slow vector $\mathbf{x}^I$ and a fast vector $\mathbf{x}^{II}$, which are distinguished by a nonsingular matrix $\boldsymbol{\epsilon}$ having eigenvalues close to zero \cite{Engquist2005HeterogeneousMM}.
The stiffness of a dynamical system described in the form of \eqref{stiff_ode} can be defined based on the eigenvalues of the Jacobian of $\mathbf{f}$. Specifically, stiffness is measured as the ratio of the largest eigenvalue to the smallest eigenvalue in terms of absolution value, i.e. $\left|\lambda_{\max } /\left|/ \lambda_{\min }\right|\right.$. For a two-timescale system like \eqref{stiff_ode}, the eigenvalues separate into two groups due to the time-scale gap defined by $\boldsymbol{\epsilon}$. Hence, the stiffness ratio is approximately $O\left(\|\boldsymbol{\epsilon}^{-1}\|\right) \gg 1$, in particular $\|\boldsymbol{\epsilon}^{-1}\|$ represents the largest absolute value of eigenvalues of $\boldsymbol{\epsilon}^{-1}$ when 2-norm $\left(\|\cdot\|_{2}\right)$ is applied.

This paper focuses on accelerating power system simulations that include both fast EMT dynamics arising from, e.g., IBRs and the network and relatively slow electromechanical dynamics associated with synchronous generators. These dynamics span time scales from approximately $10^{-6} \mathrm{~s}$ to $10^2 \mathrm{~s}$, resulting in a stiffness ratio approaching $10^8$ or higher. To address the stiffness of power system models with IBRs for efficient simulation, two types of fast dynamics need to be considered: 1) dissipative EMT transients characterized by large eigenvalues with negative real parts, which occur immediately after a disturbance like a fault or switching, and 2) sustained oscillatory dynamics characterized by eigenvalues having near-zero real parts, whose examples include instantaneous current and voltage waveforms around the synchronous frequency and subsynchronous oscillations with IBRs. Conventional EMT simulation tools face challenges in efficiently simulating such two-timescale power system models over extended time periods due to their stiffness. These tools often require very small step sizes, in the microsecond range, to maintain numerical stability while simulating continuously until slow electromechanical dynamics, in the range of tens of seconds to minutes, conclude and the system approaches its steady-state condition. In contrast, the proposed HMM approach leverages the time-scale gap defined by $\boldsymbol{\epsilon}$ and introduces a macro-process to enable more efficient simulations once dissipative EMT transients subside or oscillatory dynamics reach a steady state.  
\color{black}  Suppose the existence of an effective macro-model as a reduction of \eqref{stiff_ode} with a vector $\mathbf{u}: \mathbb{R}^{+} \mapsto \mathbb{R}^{n_u}$ focusing on slow dynamics when $\|\boldsymbol{\epsilon}\|$ converges to $0$
\vspace{-1mm}
\begin{equation}
\label{macro}
    \dot{\mathbf{u}}=\bar{\mathbf{f}}(\mathbf{u}, t).
\vspace{-1mm}
\end{equation}
To distinguish it from the original system\eqref{stiff_ode}, the state of the original system \eqref{stiff_ode} is defined as the micro-state, and directly solving the original micro-model is called the micro-process. \color{black}An engineering approach for model reduction may assume fast dynamics are all stable and ignore their state variables in \eqref{stiff_ode} by simply letting $\boldsymbol{\epsilon}=\mathbf{0}$. Thus, the second equation in \eqref{stiff_ode} becomes an algebraic equation, reducing the stiff ODE system to a DAE (differential-algebraic equation) system, which is the conventional form of a power system model for transient stability simulations. 

\color{black}
Comparatively, the HMM approach focuses on finding the authentic $\mathbf{u}$ and its vector field $\bar{\mathbf{f}}$ that characterize a macro-model as manifested from the real-time response of the stiff system. For instance, in \cite{Engquist2005HeterogeneousMM}, it is assumed that $\mathbf{u} \equiv \mathbf{x}$ while the effective force $\bar{\mathbf{f}}$ on slow dynamics is obtained by:
\begin{equation}
\label{def_force_1}
\bar{\mathbf{f}}(t)=\lim _{\delta \rightarrow 0}\left(\lim _{\|\boldsymbol{\epsilon}\| \rightarrow 0} \frac{1}{\delta} \int_t^{t+\delta} \mathbf{f}\left(\mathbf{x}, \tau\right) d \tau\right).
\end{equation}
However, it is often time-consuming to derive $\bar{\mathbf{f}}$ by \eqref{def_force_1}. Instead, the HMM suggests evaluating it ``on the fly" of simulating \eqref{stiff_ode}. Namely, the analytical form of $\bar{\mathbf{f}}(t)$, i.e., an explicit reduction of \eqref{stiff_ode} is neither used nor required. Rather, $\bar{\mathbf{f}}(t)$ is estimated along the simulation as an implicit reduction, and hence it is adaptive to each simulated case regarding averaged macroscopic dynamics with $\mathbf{u}$ while important microscopic dynamics are still regained by solving \eqref{stiff_ode} and \eqref{macro} interactively. 
For instance, Ref. \cite{Engquist2005HeterogeneousMM} processes $\mathbf{f}_{}$ using a kernel function ${K}^{p, q} \in \mathbb{{K}}^{p, q}(I)$ to relax all dissipative modes and average stiff dynamics over a sliding time interval for estimating $\bar{\mathbf{f}}(t)$. ${K^{p, q}}$ is assumed to be $q^\text{th}$-order differentiable (i.e., $\in C_c^q(\mathbb{R})$) satisfying \eqref{intgral_condition}, with a compact support: $\operatorname{supp}({K^{p, q}})=I$, i.e. $K^{p, q}(t)\equiv0$ for $t\in \mathbb{R}\backslash I$. 
\begin{equation}
\label{intgral_condition}
\int_{\mathbb{R}} K^{p, q}(t) t^r d t= \begin{cases}1, & r=0 \\ 0, & 1 \leq r \leq p .\end{cases}
\end{equation}
The kernel can be scaled by $\eta$ with a similar shape as:
\begin{equation}
K^{p, q}_\eta(t):=\frac{2}{\eta} K^{p, q}\left(\frac{2t}{\eta}\right),
\end{equation}
then, with $\eta \rightarrow 0 \text { as } |\boldsymbol{\epsilon}| \rightarrow 0$, $\bar{\mathbf{f}}$ is estimated by a convolution between the kernel and $\mathbf{f}_{}$:
\begin{equation}
\label{approximation}
K_\eta^{p, q} * \mathbf{f}=K_\eta^{p, q} *\left(\bar{\mathbf{f}}+\Tilde{\mathbf{f}}_{\epsilon}\right) \rightarrow \bar{\mathbf{f}} \text { as } \|\boldsymbol{\epsilon}\| \rightarrow 0,
\end{equation}
where $\Tilde{\mathbf{f}}_{\epsilon}(t)$ represents fast dynamics in \eqref{stiff_ode} that need to be averaged, including dissipative transient and oscillatory dynamics. 

A schematic HMM algorithm solves \eqref{stiff_ode} by repeatedly taking these two steps is illustrated by Fig. \ref{fig:A 2-timescale HMM scheme} and the following:
\par
\noindent \textbf{Step 1}: Evolve the micro-state to estimate macro-force: (i) Reconstruct the initial micro-state $\mathbf{x}^I_0=\mathbf{x}^I\left(t_n\right), \mathbf{x}^{II}_0=\mathbf{x}^{II}\left(t_n\right)$; (ii) Evolve the micro-state for an interval $\eta$ by solving \eqref{stiff_ode} for $t \in\left[t_n, t_n+\eta\right]$ with $\mathbf{x}^I_0, \mathbf{x}^{II}_0$ using a numerical solver (i.e. the ``Micro-Solver" in Fig. \ref{fig:A 2-timescale HMM scheme}) at a time step of $h$; (iii) Average the vector field of \eqref{stiff_ode} on the fly to generate $\overline{\mathbf{f}}$ via this convolution:
$$
K_{\eta}^{p, q} *\left[\begin{array}{c}
\mathbf{f}^I \\
\boldsymbol{\epsilon}^{-1} \mathbf{f}^{I I}
\end{array}\right]\left(\mathbf{x}^I\left({\Delta}\right), \mathbf{x}^{II}\left({\Delta}\right), {\Delta}\right) \rightarrow \overline{\mathbf{f}}\left(t^{\prime}_{n}\right),
$$
where $\Delta = t_n+\delta t \in \left[t_n, t_n+\eta\right]$ represents the evaluation point of the convolution operation.
\par
\noindent \textbf{Step 2}: To evolve the macro-state, computing $\mathbf{x}^{I}_{n+1}$ and $\mathbf{x}^{II}_{n+1}$ at $t=t_{n+1}$ using past macro-states $\mathbf{x}^{I}_n, \mathbf{x}^{II}_{n}$ and their corresponding $\overline{\mathbf{f}}$ by a numerical solver (i.e. the ``Macro-Solver") at a time step $H>>h$, and then let $n=n+1$. 
\begin{figure}
    \centering
    \includegraphics[width=0.8\linewidth]{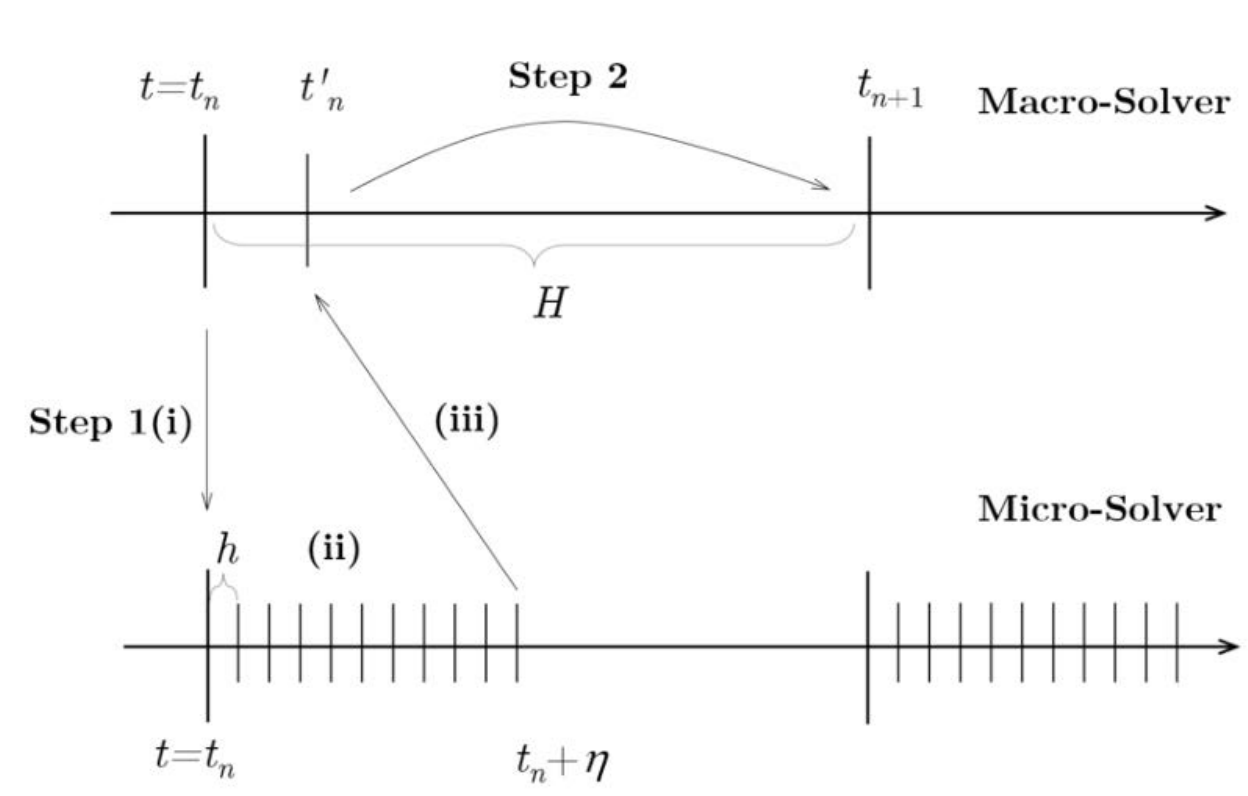}
    \caption{A two-timescale HMM scheme.}
    \label{fig:A 2-timescale HMM scheme}
    \vspace{-3mm}
\end{figure}

\color{black}  
For an EMT model in the form of \eqref{stiff_ode}, its eigenvalues exhibit two distinct time scales, corresponding to the slow electromechanical dynamics associated with the state variables in $\mathrm{x}^I$, and the fast EMT dynamics associated with the state variables in $\mathrm{x}^{I I}$. In general, fast state variables in $\mathrm{x}^{I I}$ include instantaneous currents of inductive components, instantaneous voltages of capacitive components and the terminal variables controlled by power electronic converters. Slow state variables in $\mathrm{x}^I$ are typically associated with synchronous generators and their controllers, which have much slower time constants compared to $\mathrm{x}^{I I}$. For the optimal performance of the proposed HMM approach, the separation between $x^I$ and $x^{I I}$ can be fine-tuned. In particular, in the proposed new HMM framework, the macro-state and micro-state share the same $\mathbf{x}^I$ (i.e., slow state variables), unlike existing HMM algorithms.
\color{black}

In the rest of this section, subsection II-\textit{B} will provide an overview of the proposed HMM approach for efficiently simulating EMT models, which includes its main idea and motivations to improve a generic HMM approach, basic steps, and its theoretical speedup on EMT simulations and computational complexity. Then, II-\textit{C} introduces the EMT model considered in the paper for a power system with IBRs. II-\textit{D} and II-\textit{E} respectively present the detailed micro-process and macro-process algorithms of the proposed HMM approach applied to the EMT model. 

\subsection{Overview of the Proposed HMM Approach}

\vspace{1mm}
\textit{1) Main Idea and Motivations}

\color{black}
Over the past two decades, the generic HMM concept, as discussed in Section II-A, has been successfully applied in various complex systems that exhibit obvious multi-timescale dynamics, including complex fluids \cite{REN20051}, dynamics of solids at finite temperature \cite{Weinan07}, and the three-body problem \cite{Engquist2005HeterogeneousMM}. However, as outlined in Section II-B, the original HMM framework proposed in the field of applied mathematics cannot be directly applied to power systems as highly stiff nonlinear systems networking a large variety of resources and devices. For the first time, this paper proposes a scalable HMM approach for simulations of power systems in EMT models.
\color{black}The proposed HMM approach for simulating the EMT model of power systems will build upon the generic HMM framework, with sufficient considerations given to EMT models and simulation requirements. Introduce an intermediate variable denoted by $\mathbf{u}_\epsilon$, which are linked to the mciro-state via a compression transformation $Q$ and a reconstruction transformation $R$:
\vspace{-3mm}
\begin{equation}
\begin{aligned}
    \label{RandQ}
    \mathbf{u}_\epsilon = Q(\mathbf{x}),     \mathbf{x} = R(\mathbf{u}_\epsilon),
\end{aligned}
\vspace{-1mm}
\end{equation}
$Q$ and $R$ can be inverses of each other when $n_u = n_x$. Note that $\mathbf{u}_\epsilon$ can still involve fast dynamics unless $\boldsymbol{\epsilon}=0$, so its vector field $\hat{\mathbf{f}}_{\epsilon}: \mathbb{R}^{n_{u}}\times\mathbb{R}^{+} \mapsto \mathbb{R}^{n_u}$ may still contain two timescales. Then, the macro-state $\mathbf{u}$ can be obtained based on the intermediate state and the macro-force:
\begin{equation}
\label{stiff_macro}
    \dot{\mathbf{u}}_\epsilon=\hat{\mathbf{f}}_{\epsilon}\left(\mathbf{u}_\epsilon, t\right)\rightarrow \dot{\mathbf{u}}=\bar{\mathbf{f}}\left(\mathbf{u}, t\right) \text{ as }\|\boldsymbol{\epsilon}\| \text{ converges to 0.}
\end{equation}

Unlike \cite{Engquist2005HeterogeneousMM}, we assume that the macro-state may have a lower dimension than $\mathbf{x}_{}$, the micro-state of \eqref{stiff_ode}, i.e. $n_u \leq n_x$, 
Thus, the effective fore can be defined by 
\begin{equation}
\label{def_force}
\bar{\mathbf{f}}(t)=\lim _{\delta \rightarrow 0}\left(\lim _{\epsilon \rightarrow 0} \frac{1}{\delta} \int_t^{t+\delta} \hat{\mathbf{f}}_\epsilon\left(Q\mathbf{x}, \tau\right) d \tau\right),
\end{equation}
whose estimation, to avoid time-consuming computation, requires an algorithm similar to \eqref{approximation} but more sophisticated and considering power systems.

In bulk power system simulations assessing electromechanical transient stability, a DAE model that ignores the faster dynamics with $\mathbf{x}^{II}$ by letting $\boldsymbol{\epsilon}=0$ is used. Thus, the ODE on $\mathbf{x}^{II}$ is reduced to an algebraic equation acting as constraints and control laws of the grid such as the power-flow equation. Comparatively, in the HMM framework, the fast transient dynamics are averaged for instantaneous voltages and currents of all capacitive or inductive components of the network. Even though there is still $0 \approx \mathbf{f}^{I I}$ after averaging, it indicates a stable limit cycle for instantaneous voltages and currents varying at about the ac fundamental frequency 60 Hz. Further, by appropriately selecting transformations $Q$ and $R$, phasor-like macro-states on voltages and currents can be obtained as more meaningful and slower variables from three-phase time-varying waveforms of voltages and currents, similar to the envelopes of voltages and currents defined in transient stability simulations.

However, the generic HMM \cite{Engquist2005HeterogeneousMM} introduced above is unable to yield such macro-state variables since its $R$ and $Q$ are assumed to be identical operators with $\mathbf{u} \equiv \mathbf{x}$ as $|\boldsymbol{\epsilon}|$ converges to 0. Thus, $\hat{\mathbf{f}}_{\epsilon}\equiv \mathbf{f}$ in \eqref{approximation}, meaning the macro-force $\bar{\mathbf{f}}$ is approximated by solving the stiff dynamic system \eqref{stiff_ode} directly. This implies a further derivation of \eqref{stiff_macro} for EMT models. For instance, in our preliminary work \cite{Kaiyang}, the generic HMM is not directly applied to a conventional $abc$-frame EMT model; rather, a derived $0dq$-frame EMT model is used. In this paper, the proposed HMM approach is able to estimate macro-dynamics from computed micro-dynamics directly by selecting a kernel function properly. Thus, there is no need to know \eqref{stiff_macro} explicitly, consistent with the motivation of the HMM. 
 Approximate $\bar{\mathbf{f}}$ by a time series of $\mathbf{u}_\epsilon=Q(\mathbf{x})$. Considering the strong symmetry in EMT waveforms on the micro-state, a symmetric variant bump kernel function:
\vspace{-2mm}
\begin{equation}
\label{kernel}
 {K}(t)= \begin{cases}C e^{-\frac{D}{1-t^2}}, &-1 < t < 1 \\ 0, &\lvert t \rvert \geq 1.\end{cases} 
\vspace{-1mm}
\end{equation}
is used to average the microscopic dynamics over a sliding time window of $\eta$ to generate $\overline{\mathbf{f}}$, where $C$ and $D$ are adjustable parameters to satisfy \eqref{intgral_condition}. By utilizing the above kernel, a new way to calculate the derivative of slow variables can be provided. For an arbitrary time interval $\Omega=\left[t_n, t_n+\eta\right]$, let $\Delta = t_n+\delta t=t_n+\eta/2$, we have
\vspace{-2mm}
\begin{equation}
    \begin{aligned}
    \label{integral_slow_de}
K_\eta* \dot{\mathbf{u}}_\epsilon(\Delta)
&=\int_{\mathbb{R}} \dot{\mathbf{u}}_\epsilon(\tau) K_\eta(\Delta-\tau) d \tau\\
&=\int_{t_n}^{t_n+\eta} \dot{\mathbf{u}}_\epsilon(\tau) K_\eta(\Delta-\tau) d \tau\\
& =\left.K_\eta(\Delta-\tau) \mathbf{u}_\epsilon(\tau)\right|_{\Omega}-\int_{t_n}^{t_n+\eta} \mathbf{u}_\epsilon(\tau) d K_\eta(\Delta-\tau)\\
&=\int_{t_n}^{t_n+\eta}\mathbf{u}_\epsilon(\tau) {K}^{\prime}_\eta(\Delta-\tau) d \tau \\
& ={K}^{\prime}_\eta * \mathbf{u}_\epsilon(\Delta),
\end{aligned}
\end{equation}
where the first line is based on the definition of the convolution operation at $\Delta$; the second and fourth lines hold due to the compact $\operatorname{supp}(K_\eta)=\left[-\eta/2, \eta/2 \right]$; the third line follows from the Integration by Parts. Benefiting from \eqref{integral_slow_de}, the calculation of $K_\eta*\hat{\mathbf{f}}_{\epsilon}(\Delta)$ can be replaced by ${K}^{\prime}_\eta * \mathbf{u}_\epsilon(\Delta)$, which prevents the issue that the analytical or numerical form of $\hat{\mathbf{f}}_{\epsilon}$ in \eqref{stiff_macro} may be unknown when $R$ and $Q$ are not identical operators. 

\vspace{1mm}
\textit{2) Steps of the Proposed HMM Approach}

The proposed HMM approach for EMT models in the form of \eqref{stiff_ode} adopts a solution process repeatedly performing the same two steps in an alternative way as illustrated by Fig.~\ref{fig:A 2-timescale HMM scheme}. 
\par
\noindent\textbf{Step 1 (Micro-Process)}: Estimate the macro-force by: (i) reconstructing $\mathbf{x}_{0}=R\left(\mathbf{u}_{\epsilon, n}\right)$ at $t=t_n$ by a linear or nonlinear transformation $R$; (ii) evolving the micro-state for interval $\eta$ by solving \eqref{stiff_ode} for $t \in\left[t_n, t_n+\eta\right]$ with $\mathbf{x}\left(t_n\right)=\mathbf{x}_{0}$ using an $r$-th order numerical or semi-analytical micro-solver at a time step of $h$ for the EMT simulation, and (iii) constructing time series of $\mathbf{u}_{\epsilon}$ in $\left[t_n, t_n+\eta\right]$ by $\mathbf{u}_{\epsilon}=Q\left(\mathbf{x}\right)$ and averaging the vector field of \eqref{stiff_ode} on the fly to generate $\overline{\mathbf{f}}$ at $t_n^{\prime}=t_n+\eta$ via:
\begin{equation}
\label{kernel conv}
    {K}^{\prime}_\eta * \mathbf{u}_\epsilon(\Delta)=K_\eta* \hat{\mathbf{f}}_{\epsilon}\left(\mathbf{u}_\epsilon(\Delta), \Delta\right) \rightarrow \overline{\mathbf{f}}\left(t^{\prime}_{n}\right),
\end{equation}
Here $\Delta = t_n+\eta/2$ unlike the generic HMM.
\par
\noindent\textbf{Step 2 (Macro-Process)}: Evolve the macro-state by computing $\mathbf{u}_{{n+1}}$ at $t=t_{n+1}$ by (2) using past macro-states $\mathbf{u}_{n}, \mathbf{u}_{n-1}, \ldots, \mathbf{u}_{n-k}$ (or their estimates by the micro-solver at a later instance such as $t_n^{\prime}$ for $\mathbf{u}_{\epsilon}$ ) and their corresponding $\overline{\mathbf{f}}$ by an $S$-th order macro-solver at a time step $H>>h$, and then let $n=n+1$ until termination time $T$ is approached. \color{black}The overall steps are also summarized and shown in the flowchart Fig.~\ref{fig:flowchart}.\color{black}

\color{black}
\begin{remark}

The proposed HMM approach ensures smooth two-way transitions between the micro-process simulation (i.e., the full EMT simulation) and the macro-process simulation focusing on slower dynamics through a compression transformation $Q$ that compresses the micro-state to a macro-state and its inverse, a reconstruction transformation $R$ back to the micro-state. Detailed EMT dynamics are resolved in the micro-process using a small micro-step $h$ for a duration of $\eta$, and the results are averaged via a kernel function \eqref{kernel conv} to estimate the macro-force driving the macro-state to be simulated over a much larger, macro-step $H$. Smoothness and numerical stability with the transitions can be ensured by the selection of a kernel function based on, e.g., \cite[$\bold{Lemma~2.2}$]{Engquist2005HeterogeneousMM}.
\end{remark}
\color{black}

\vspace{1mm}
\begin{figure}
    \centering
    \includegraphics[width=1\linewidth]{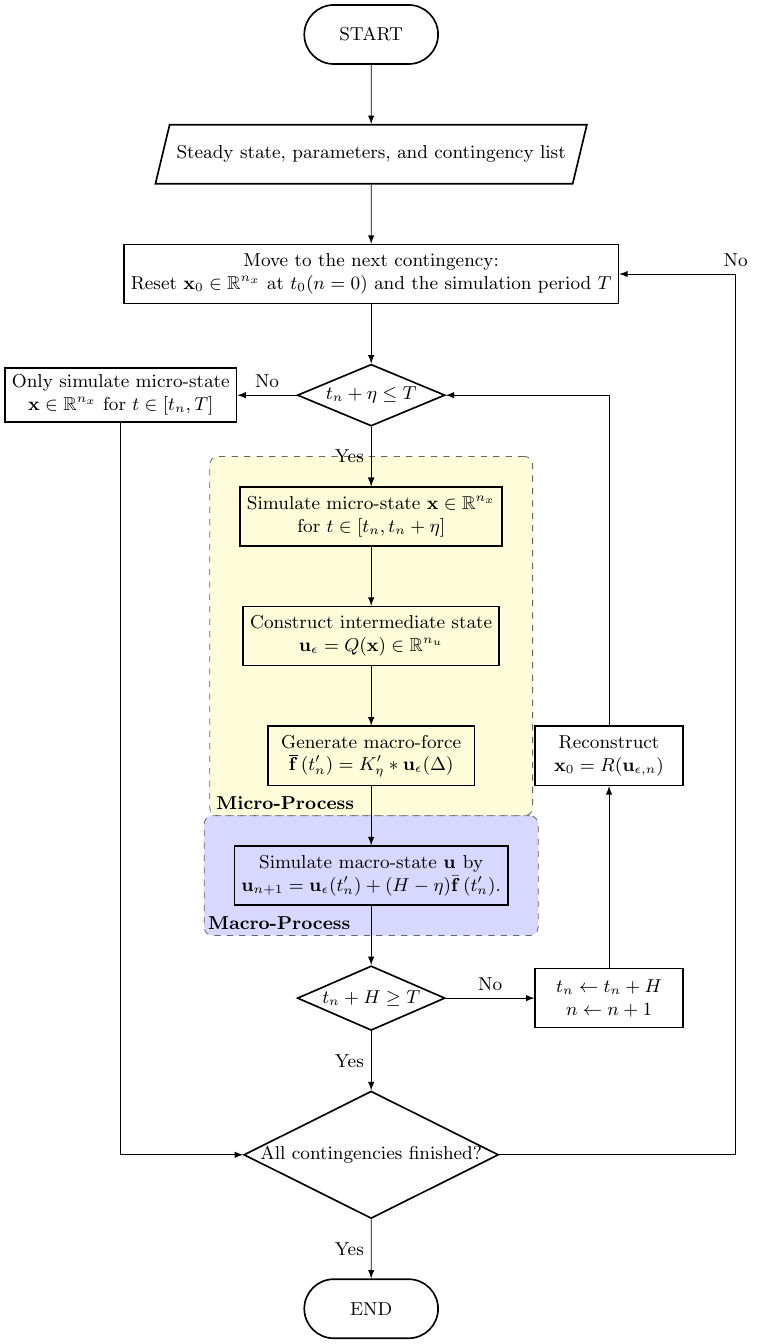}
    \color{black}
    \caption{Flowchart of the proposed HMM approach}
    \label{fig:flowchart}
    \vspace{-4mm}
\end{figure}
\textit{3) Speedup and Complexity}       

A detailed full EMT simulation using the same EMT model and the same time step as $h$ with the micro-solver is equivalent to the proposed HMM approach having the micro-process spans the entire simulation period. Comparatively, the proposed HMM approach can speed up the simulation by involving successive macro-processes using a much larger time step of $H$. Hence, the times of speedup will increase with the increase of $H$ and the decrease of the interval $\eta$. Besides these two, the computational complexity of the approach also depends on the micro-solver time step $h$. Define the kernel estimation error $|\mathcal{E}_{H M M}|= \| {K}^{\prime}_\eta * \mathbf{u}_{\epsilon}(\Delta)- \overline{\mathbf{f}}\left(t^{\prime}_{n}\right)\|$, for a well-defined HMM approach, it should be bounded by $H^{S}$, so that it is comparable to the local truncation error of the given macro-scheme. 
There is a following theorem on the theoretical speedup and computational complexity of the simulation using the proposed HMM approach compared to a full EMT simulation. 
\begin{theorem}
Suppose the micro-process uses an $r$-th order solver with a time step of $h$, the macro-process uses an $S$-th order solver with a time step of $H$, and a $k$-th order differentiable kernel is applied to the first $\eta$ interval of each micro-process, the HMM approach---alternating between the micro-process and macro-process via $Q$ and $R$, both being invertible linear transformations---has these estimates on its computational complexity and speedup compared to a solution process involving only the micro-process:
\begin{equation}
\label{Complexity}
\vspace{-2mm}
\text { Complexity }=\frac{\eta}{H h} \sim O\left(\|\boldsymbol{\epsilon}\|^{-\frac{r+k+1}{k r}} H^{-\frac{r+k+1}{k r} S-1}\right) 
\end{equation}
\begin{equation}
\label{speedup}
\text {Speedup }=\frac{H}{\eta} \sim O\left(\|\boldsymbol{\epsilon}\|^{\frac{1}{k}-1} H^{\frac{S}{k}+1}\right)
\end{equation}
such that the kernel estimation error $|\mathcal{E}_{\color{black}\text{HMM}}| \leq C_K H^S$, where $C_K$ is a constant.
\end{theorem}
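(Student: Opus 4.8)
The plan is to reduce both ratios in the statement to the single free parameter $\eta$ (the micro-window length) together with the micro-step $h$, determine the smallest $\eta$ and largest $h$ compatible with the accuracy budget $|\mathcal{E}_{\text{HMM}}|\le C_K H^S$, and then substitute. First I would count cost: in each HMM cycle the macro-solver advances the simulated time by $H$ with $O(1)$ evaluations (negligible), while the micro-solver performs $\eta/h$ steps on $[t_n,t_n+\eta]$; hence the number of micro-steps per unit of simulated time is $\eta/(Hh)$, which is the ``Complexity'', and the cost $1/h$ of a pure micro-simulation divided by this gives the ``Speedup'' $H/\eta$. It then remains only to fix the scalings of $\eta$ and $h$ in $\|\boldsymbol{\epsilon}\|$ and $H$.

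The core is to estimate $|\mathcal{E}_{\text{HMM}}|=\|K'_\eta*\mathbf{u}_\epsilon(\Delta)-\overline{\mathbf{f}}(t'_n)\|$. Using the integration-by-parts identity \eqref{integral_slow_de} — valid since the kernel is $k$-times differentiable with compact support — one rewrites, on the exact trajectory, $K'_\eta*\mathbf{u}_\epsilon=K_\eta*\dot{\mathbf{u}}_\epsilon=K_\eta*(\overline{\mathbf{f}}+\tilde{\mathbf{f}}_\epsilon)$, where $\tilde{\mathbf{f}}_\epsilon$ is the fast fluctuation of the (linearly transformed, hence structurally unchanged) vector field. I would split $|\mathcal{E}_{\text{HMM}}|$ into three parts: (i) the slow-field smoothing error $K_\eta*\overline{\mathbf{f}}-\overline{\mathbf{f}}$, which is high order in $\eta$ by the moment conditions \eqref{intgral_condition} and Taylor expansion (the $O(\eta)$ gap between $\Delta$ and $t'_n$ being absorbed by Lipschitz continuity of $\overline{\mathbf{f}}$); (ii) the relaxation residual $K_\eta*\tilde{\mathbf{f}}_\epsilon$, bounded by iterating the integration by parts of \eqref{integral_slow_de} — each of the $k$ steps moves one derivative onto the kernel, contributing a factor $\sim\|\boldsymbol{\epsilon}\|/\eta$ (fast time scale over window width), and since the fast part of $\dot{\mathbf{u}}_\epsilon$ carries the $\boldsymbol{\epsilon}^{-1}$ of the second line of \eqref{stiff_ode}, this gives $\|K_\eta*\tilde{\mathbf{f}}_\epsilon\|\sim\|\boldsymbol{\epsilon}\|^{-1}(\|\boldsymbol{\epsilon}\|/\eta)^{k}=\|\boldsymbol{\epsilon}\|^{k-1}\eta^{-k}$; and (iii) the micro-solver discretization error, where the $r$-th-order solver resolving the fast scale $\|\boldsymbol{\epsilon}\|$ with step $h$ perturbs $\mathbf{u}_\epsilon$ and, after convolution against $K'_\eta$, contributes an error scaling as a combination of powers of $h/\|\boldsymbol{\epsilon}\|$ and $\|\boldsymbol{\epsilon}\|/\eta$.

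I would then impose the budget term by term. Setting the relaxation residual $\|\boldsymbol{\epsilon}\|^{k-1}\eta^{-k}\sim C_K H^S$ gives the smallest admissible window $\eta\sim\|\boldsymbol{\epsilon}\|^{(k-1)/k}H^{-S/k}$, whence $H/\eta\sim\|\boldsymbol{\epsilon}\|^{1/k-1}H^{S/k+1}$, which is \eqref{speedup}. Requiring term (iii) to stay within the same budget fixes the largest admissible $h$ as a function of $\eta$, $\|\boldsymbol{\epsilon}\|$, and $H$; substituting the expression for $\eta$ and forming $\eta/(Hh)$ collapses the exponents to $-(r+k+1)/(kr)$ on $\|\boldsymbol{\epsilon}\|$ and $-(r+k+1)S/(kr)-1$ on $H$, which is \eqref{Complexity}. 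One also checks that term (i) is then dominated — it is a strictly higher power of $\eta$ than the others — which needs only the mild compatibility assumption that $\|\boldsymbol{\epsilon}\|$ be small enough relative to $H$ for $\eta$ to lie in the transient-free regime; I would state this explicitly.

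I expect the main obstacle to be making bound (ii) rigorous with a constant uniform in $\|\boldsymbol{\epsilon}\|$: the ``one integration by parts costs one factor $\|\boldsymbol{\epsilon}\|/\eta$'' heuristic requires uniform bounds on $\tilde{\mathbf{f}}_\epsilon$ and its iterated fast-time antiderivatives, which in turn require the fast modes to relax over $[t_n,t_n+\eta]$ — dissipative modes decaying like $e^{-ct/\|\boldsymbol{\epsilon}\|}$ and the near-conservative $\approx 60$ Hz oscillations staying bounded with controlled mean — so that the reconstruction $R$ places the micro-state close enough to the slow manifold that the leftover initial transient is harmless; the moment conditions \eqref{intgral_condition} and the compact support are precisely what annihilate the boundary terms and lowest-order polynomial parts at every step, and the invertibility of the linear $Q,R$ enters only through their finite condition numbers. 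A secondary difficulty is the cross-scaling in term (iii): the micro-solver must be not merely $r$-th order but numerically stable on the stiff window (hence the role of a semi-analytical, stiffly stable micro-solver), so that its perturbation stays $O(h^r)$ uniformly in $\|\boldsymbol{\epsilon}\|$ and the $1/\eta$-type weight in $K'_\eta$ does not amplify it; nailing the exact power of $\|\boldsymbol{\epsilon}\|/\eta$ carried by that term is the delicate bookkeeping behind the stated Complexity exponents.
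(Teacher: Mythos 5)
Your proposal follows essentially the same route as the paper: decompose $|\mathcal{E}_{\text{HMM}}|$ into the kernel-averaging error $C_1\eta^p + C_2\left(\|\boldsymbol{\epsilon}\|/\eta\right)^k$ with $C_2\propto\|\partial\hat{\mathbf{f}}_\epsilon/\partial\mathbf{u}_\epsilon\|_\infty\sim\|\boldsymbol{\epsilon}\|^{-1}$ (the paper simply cites \cite[Theorem 2.7]{Engquist2005HeterogeneousMM} where you sketch its integration-by-parts proof) plus the micro-solver error, impose the $H^S$ budget on each to obtain $\eta\sim\|\boldsymbol{\epsilon}\|^{(k-1)/k}H^{-S/k}$ and the admissible $h$, and substitute into $\eta/(Hh)$ and $H/\eta$. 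The only piece you leave implicit---the precise form of the discretization term, which the paper writes as $\tilde C\,\eta h^r\|\boldsymbol{\epsilon}\|^{-(r+2)}$, yielding $h\sim\eta^{-1/r}H^{S/r}\|\boldsymbol{\epsilon}\|^{1+2/r}$---is exactly what is needed to produce the exponents you correctly state, so the argument matches the paper's.
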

\begin{proof}
Suppose $K \in \mathbb{K}_\eta^{p, k}$ is used. 
 For a given micro-solver, the error in approximating $\mathbf{x}_{}$ in Step 1(ii) is $C_\text{micro} h^r \eta \|\boldsymbol{\epsilon}\|^{-(r+1)}$, $C_\text{micro}$ is a constant,
where the term $\|\boldsymbol{\epsilon}\|^{-(r+1)}$ comes from $d^{(r+1)}\mathbf{x}_{} / d t^{(r+1)}$, notice that $Q$ is bounded, so the error in approximating $\mathbf{u}_\epsilon$ is also bounded, namely, $\|  Q\mathbf{x}_{}(t_n)- Q\mathbf{x}_{n}\| \leq \| Q\|{ }_{\infty} C h^r \eta \|\boldsymbol{\epsilon}\|^{-(r+1)}$, thus we have the error $\mathcal{E}_{\text {micro }}$ in evaluating $\hat{\mathbf{f}}_{\epsilon}$ accumulated in Step 1 (ii) and (iii) satisfies
\begin{equation}
\begin{aligned}
\left|\mathcal{E}_{\text{micro}}\right| 
&\leq C_\text{micro} \| Q\|_{\infty} \left\|\frac{\partial \hat{\mathbf{f}}_{\epsilon}}{\partial \mathbf{u}_\epsilon}\right\|_{\infty} 
\eta h^r \|\boldsymbol{\epsilon}\|^{-(r+1)} \\
&\leq \tilde{C} \eta h^r \|\boldsymbol{\epsilon}\|^{-(r+2)} .
\end{aligned}
\end{equation}

Hence, it is required $\tilde{C}\eta h^r \|\boldsymbol{\epsilon}\|^{-(r+2)} \sim H^S$. Omitting the constant, this leads to:
\begin{equation}
\label{est1}
h\sim O\left(\eta^{-1 / r} H^{S/ r} \|\boldsymbol{\epsilon}\|^{1+2 / r} \right)\text {. }
\end{equation}
From \cite[$\bold{Theorem~2.7}$]{Engquist2005HeterogeneousMM}, the force estimation error using a kernel in $\mathbb{K}^{p, k}$ is
\begin{equation}
\mathcal{E}_{\color{black}\text{HMM}}^{(a)} \leq C_1 \eta^p+C_2\left(\frac{\|\boldsymbol{\epsilon}\|}{\eta}\right)^k,    
\end{equation}
where \(C_2\) is proportional to \(\left\|\partial \hat{\mathbf{f}}_{\epsilon} / \partial \mathbf{u}_{\epsilon}\right\|_{\infty}\), then we have
$C_2\left(\|\boldsymbol{\epsilon}\|/\eta\right)^k = \tilde{C}_2 \|\boldsymbol{\epsilon}\|^{k-1}/\eta^k.$
Ignoring the constants, observe that the term \(\|\boldsymbol{\epsilon}\|^{k-1} / \eta^k\) becomes dominant over \(\eta^p\) for \(\eta < \eta^*(\|\boldsymbol{\epsilon}\|)\), allowing for the minimization of the number of micro-processes. In this case, the dominating term is
\begin{equation}
\label{est2}
    \frac{\|\boldsymbol{\epsilon}\|^{k-1}}{\eta^k} \sim H^S \Longrightarrow \eta \sim H^{-S / k} \|\boldsymbol{\epsilon}\|^{(k-1) / k} .
\end{equation}
Notice that given $H$ is fixed, it is easy to know that \eqref{est1} and \eqref{est2} hold if and only if \eqref{Complexity} and \eqref{speedup} hold. Hence, with conditions \eqref{Complexity} and \eqref{speedup}, $|\mathcal{E}_{\text{\color{black}HMM}}|\leq \left|\mathcal{E}_{\text {micro}}\right|+ |\mathcal{E}_{\text{\color{black}HMM}}^{(a)}|\leq C_K H^S$. 
\end{proof}
\color{black}
\textbf{Theorem 1} can be applied to any simulation tool that implements the proposed HMM approach as long as the tool offers variable-order and variable-step solvers or allows for the adjustment of those parameters including $r$, $S$ and $H$.    
\color{black}
 \color{black}
\begin{remark}
     When $k \rightarrow \infty$, i.e., the kernel function becomes smooth, the complexity and speedup of the HMM framework simplify to:
\begin{equation}
\label{Complexity_reduced}
\vspace{-2mm}
\text { Complexity }=\frac{\eta}{H h} \sim O\left(\|\boldsymbol{\epsilon}\|^{-\frac{1}{r}} H^{-\frac{1}{r} S-1}\right), 
\end{equation}
\begin{equation}
\label{speedup_reduced}
\text {Speedup }=\frac{H}{\eta} \sim O\left(\|\boldsymbol{\epsilon}\|^{-1} H\right).
\end{equation}
When $r \gg S$, increasing $r$ reduces the overall complexity of the HMM approach because a higher order micro-solver enables more accurate simulation of the micro-process even with a larger micro-step $h$. Consequently, the number of micro-steps can be reduced, decreasing the total computational time. However, increasing the order might require more computation for each step by the micro-solver. Therefore, the overall speedup needs to balance the tradeoff between the reduction in computational complexity due to fewer steps and the increase of the time per step caused by the higher-order micro-solver. For a power system with two time scales having a substantial gap defined by $\boldsymbol{\epsilon}$, where detailed EMT dynamics occur only for very limited durations through the simulation period, the overall speedup achieved using the HMM approach primarily depends on the macro-process simulation, which takes significantly larger macro-steps of $H$.
 \end{remark}
 \color{black}
\subsection{Power System Models}
This section introduces the EMT models of power systems considered in this paper for detailed designs and implementation of the proposed HMM approach.

\vspace{1mm}
\textit{1) Synchronous Generator and Controllers}  

The volage-behind-reactance model \cite{VBRmodel}, which is used in EMT simulations for generators,  is considered here to apply the proposed HMM approach. Each generator is modeled by these ODEs:
\begin{subequations}
\vspace{-2mm}
    \begin{align}
&\dot{\delta}=\Delta \omega_{r}, \\
\Delta &\dot{\omega}_{r}=\frac{\omega_0}{2 H_g}\left(p_{m}-p_{e}-D_g \frac{\Delta \omega_{r}}{\omega_0}\right) ,\\
&\dot{\lambda}_{f d }=e_{f d }-\frac{r_{f d}}{L_{l f}}\left(\lambda_{f d  }-\lambda_{a d}\right), \\
&\dot{\lambda}_{1 d}=-\frac{r_{1 d }}{L_{1 d l }}\left(\lambda_{1 d }-\lambda_{a d}\right) ,\\
&\dot{\lambda}_{1 q }=-\frac{r_{1 q }}{L_{1 q l }}\left(\lambda_{1 q }-\lambda_{a q}\right), \\
&\dot{\lambda}_{2 q }=-\frac{r_{2 q }}{L_{2 q l }}\left(\lambda_{2 q }-\lambda_{a q }\right), \\
\label{current injection}
&\dot{i}_{a b c}   =-L_{a b c }^{\prime \prime -1}\left(v_{a b c }-P^{-1} v_{0 d q}^{\prime \prime}+R_{s } i_{a b c}+\dot{L}_{a b c}^{\prime \prime} i_{a b c }\right),    
    \end{align}
\end{subequations}
where $\omega_0$ is the nominal frequency of the system, $\delta, \Delta \omega_{r}=\omega_{r}-\omega_{0}, H_g, D_g, p_{m}$, and $p_{e}$ represent the rotor angle, rotor speed deviation, inertial constant, damping coefficient, mechanical power, and electrical power of the generator, respectively.  $\lambda_{f d}, \lambda_{1 d}, \lambda_{1 q}$, and $\lambda_{2 q}$ are flux linkages of the filed winding, $d$-axis damper winding, $q$-axis first damper winding, $q$-axis second damper winding, respectively; $r_{f d }, r_{1 d}, r_{1 q}$, and $r_{2 q}$ are resistances and $L_{f d l}, L_{1 d l}, L_{1 q l}$, and $L_{2 q l}$ are leakage inductances of these four windings; $e_{f d}$ is the field voltage. Furthermore, $i_{a b c}$ and $v_{a b c}$ are three-phase terminal current and voltage that interface with the grid; $R_{S}$ is the constant stator resistance matrix and $P(\theta)$ is the Park transformation defined by 
\begin{equation}
P(\theta)=\frac{2}{3}\left[\begin{array}{ccc}
1 / 2 & 1 / 2 & 1 / 2 \\
\cos (\theta) & \cos (\theta-2 \pi / 3) & \cos (\theta+2 \pi / 3) \\
-\sin (\theta) & -\sin (\theta-2 \pi / 3) & -\sin (\theta+2 \pi / 3)
\end{array}\right]
\end{equation}
with the reference angle $\theta$ satisfying $\dot{ \theta}=\omega_r$. Note that the subtransient inductance matrix $L_{a b c }^{\prime \prime}$ periodically changes with $\theta$. The related detailed calculation of $p_e$, $L_{a b c }^{\prime \prime}$ and subtransient voltage $ v_{0 d q}^{\prime \prime}$ are shown in \cite{VBRmodel}. Without loss of generality, the TGOV1 turbine-governor model \cite{governor} and SEXS exciter model \cite{sexs} are considered for generator controls in this paper.  
\color{black}

\textit{2) Inverter-Based Resource}

\begin{figure}
    \centering
    \includegraphics[width=0.8\linewidth]{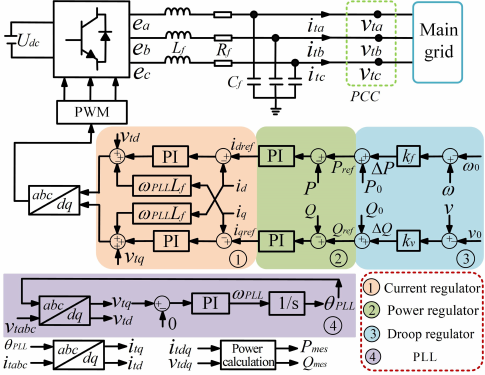}
    \caption{Diagram of a grid-following IBR model}
    \label{fig:IBR_digm}
    \vspace{-4mm}
\end{figure}
This paper considers the grid-following IBR model in Fig.~\ref{fig:IBR_digm}, which comprises an outer-loop current regulator, an inner-loop power regulator, a frequency droop controller, and a voltage droop controller\cite{Buxin}. The dynamics of its Pulse-width modulation (PWM) are disregarded in the system-level simulation and analysis. Since Fig.~\ref{fig:IBR_digm} can illustrate the control loop in
 the frequency domain clearly, the mathematical model in the
 state space are not presented in detail here for the notation
 simplicity. For instance, a Phase-Locked Loop (PLL) is used to track the bus voltage angle. Note that the subscript $ref$ represents the reference value, and $0$ signifies the steady-state value. The PLL control loop is described as 
\vspace{-3mm}
\begin{equation}
\begin{aligned}
&\dot{\theta}_{PLL}=\omega_{PLL},\\
&\dot{\phi} = v_{tq},\\
&\omega_{PLL}=\omega_{\mathrm{s}}+K_{P L L}^{\mathrm{p}}\dot{\phi}+K_{P L L}^{\mathrm{i}} \phi,
\end{aligned}
\vspace{-1.5mm}
\end{equation}
where $\theta_{PLL}$ is the angle of the point of common coupling (PCC) voltage, $\omega_{PLL}$ is the frequency of the inverter, $\phi$ is the intermediate variable in the PLL control loop, and $K_{P L L}^{\mathrm{P}}$ and $K_{P L L}^{\mathrm{I}}$ are the corresponding PI gains of the PLL loop, $v_{tq}$ is the $q$-axis PCC voltage under a local Park transformation $P(\theta_{PLL})$. For the current regulator, power regulator, and droop regulator, the detailed information used in this paper can be found in \cite{Buxin,Modelreduction}.

\vspace{1mm}
\color{black}
\textit{3) Network}  

To model the power network, the $\Pi$-model is considered for each line, which includes R-L circuits and shunt capacitances in $abc$-frame. Loads in EMT models are also regarded as R-L or R-C circuits. 

The whole network is modeled as a directed graph $\mathcal{G}=(\mathcal{N}, \mathcal{P})$ with $N$ nodes and $P$ edges with an arbitrary direction assigned to each edge $l \in \mathcal{P}$. The incidence matrix $B\in\mathbb{R}^{N \times E}$ can represent the connectivity of the network topology from a one-line diagram view:
\vspace{-3mm}
$$
B_{k, l}:=\left\{\begin{array}{ll}
+1 & , k=i \\
-1 & , k=j \\
0 & , \text {otherwise, }
\end{array} \forall l=(i, j) \in \mathcal{P} .\right.
$$
Note that for a three-phase balanced system, the connectivity is identical for different phases, and the incidence matrix $B$ can be generalized easily as an incidence matrix  $\mathbf{B} \in \mathbb{R}^{3N \times 3E}$ for the three-phase network by mapping each entry of $B$ to the corresponding $3\times3$ matrix (i.e., $+1 \mapsto +I_{3\times3}$, $-1 \mapsto -I_{3\times3}$ and $0 \mapsto \mathbb{O}_{3\times 3}$), leading this three-phase state-space network equation:
\begin{equation}
\label{Network}
    \left[\begin{array}{ll}
\mathbf{C} & \\
& \mathbf{L}
\end{array}\right]\left[\begin{array}{l}
\dot{\mathbf{v}}_{abc} \\
\dot{\mathbf{w}}_{abc}
\end{array}\right]=\left[\begin{array}{cc}
-\mathbf{G} & -\mathbf{B} \\
\mathbf{B}^{\top} & -\mathbf{R}
\end{array}\right]\left[\begin{array}{l}
\mathbf{v}_{abc} \\
\mathbf{w}_{abc}
\end{array}\right]+\left[\begin{array}{c}
\mathbf{i}_{abc}  \\
\mathbb{O}_{3E}
\end{array}\right],
\end{equation}
where $\mathbf{v}_{abc}\in \mathbb{R}^{3N}$ and $\mathbf{w}_{abc}\in \mathbb{R}^{3E}$ are three-phase node voltages and line currents associated with the capacitive and inductive storage elements inside the network, $\mathbf{i}_{abc}$ is the current injection from generators and IBRs to the network. $\mathbf{C}\in \mathbb{R}^{3N \times 3N}$,$\mathbf{G} \in \mathbb{R}^{3N\times 3N}$, $\mathbf{L}\in \mathbb{R}^{3E\times 3E}$, and $\mathbf{R} \in \mathbb{R}^{3E\times 3E}$ are three-phase parameter matrices consisting of shunt capacitance $C_i$ and conductance $G_i=R_{i}^{-1}, \forall i\in \mathcal{N}$ and line inductance $L_{i j}$ and resistance $R_{i j}$, $\forall (i,j)\in \mathcal{P}$. Equ.~\eqref{Network} is derived from Kirchhoff’s current law and Kirchhoff’s voltage law with algebraic graph theory\cite{graph}. Notice that the $\mathbf{i}_{abc}$ represents the interconnection between sources and the network, for instance, a generator's current injection to the connected bus is induced by \eqref{current injection}.

\color{black}

While a detailed power system model is presented here, the proposed approach also applies to models of other forms or time scales. A distinguishing feature of the HMM framework is that it requires only a single ODE model of the power system, namely the full EMT model. In other words, both the micro-process and macro-process are simulated using the same model, eliminating the need for model reduction of individual generators, control systems, or IBRs. The macro-state is computed directly from the micro-state and is driven by a macro-force derived from a kernel function. This characteristic ensures that the proposed HMM framework is potentially applicable to any dynamical system for which an ODE model is available.

\color{black}Based on these power system models, the detailed algorithms of the proposed HMM approach are presented as follows.

\subsection{Proposed Algorithms for the Micro-Process}
The DT (Differential Transformation) method\cite{liu2019power} is applied as the micro-solver to simulate the micro-state in the proposed HMM approach. 

\vspace{1mm}
\textit{1) Construction and Reconstruction}

\color{black}
Notice that state variables  $\mathbf{v}_{abc}$, $\mathbf{w}_{abc}$, and $\mathbf{i}_{abc}$ can be transformed into $0dq$-frame via a global Park transformation $P(\theta_{\text {global}})$ using a global reference angle $\theta_{\text {global}}$, by which the dynamics associated with the synchronous frequency (e.g. 60Hz) are almost eliminated while fast EMT dynamics and slow electro-mechanical dynamics with the network state variables become more obvious  \cite{Park} even under balanced cases. Thus, a kernel function is needed to average the simulated micro-state. 
The existence of two time scales, e.g. EMT dynamics vs. electromechanical dynamics, in the EMT model helps design the compression transformation $Q$ from the micro-state to macro-state and the reconstruction transformation $R=Q^{-1}$. In this paper, for $\mathbf{x} = [ \mathbf{x}^{I}, \mathbf{x}^{II} ]^\top$, $\mathbf{x}^{I}$ on slow dynamics includes the state variables of generators and IBRs except for their terminal currents $\mathbf{i}_{abc}$; on fast dynamics, $\mathbf{x}^{II}:=[\mathbf{v}_{abc},~\mathbf{w}_{abc},~\mathbf{i}_{abc}]^{\top}$ includes all network state variables. Thus, the compression transformation $Q$ from the micro-state to macro-state can comprise a Park transformation for network variables and an identify matrix for the other variables. The compression and reconstruction transformations are defined accordingly as $Q\coloneqq \text{diag}\left(\mathbb{I},\mathbf{P}(\mathbf{\theta}_{\text {global}})\right)$ and $R\coloneqq \text{diag}\left(\mathbb{I}, \mathbf{P}^{-1}(\mathbf{\theta}_{\text {global}})\right)$ where $\mathbb{I}$ is the identical matrix with the same dimension as $\mathbf{x}^{I}$ and $\mathbf{P}(\mathbf{\theta}_{\text {global}})$ is the vectorized invertible Park transformation for all network dynamics such that
\begin{equation}
\left[\begin{array}{l}
\mathbf{v}_{0dq} \\
\mathbf{w}_{0dq}\\
\mathbf{i}_{0dq}
\end{array}\right]=\mathbf{P}(\mathbf{\theta}_{\text {global}})\left[\begin{array}{l}
\mathbf{v}_{abc} \\
\mathbf{w}_{abc}\\
\mathbf{i}_{abc}
\end{array}\right].
\end{equation}
We have the intermediate state $\mathbf{u}_\epsilon=Q(\mathbf{x})=\left[\mathbf{u}_\epsilon^I, \mathbf{u}_\epsilon^{I I}\right]^{\top}$ as a transformation from the micro-state where $\mathbf{u}_\epsilon^I=\mathbf{x}^I$ and $\mathbf{u}_\epsilon^{I I}:=\left[\mathbf{v}_{0 d q}, \mathbf{w}_{0 d q}, \mathbf{i}_{0 d q}\right]^{\top}$. Starting from this intermediate state, the macro-state $\mathbf{u}$ focusing on slow dynamics can evolve governed by a macro-force obtained by \eqref{kernel conv}. Note that in general, the angle $\theta_{\text {global}}$ is not necessarily identical for all different state variables. For instance, each interface variable $\mathbf{i}_{abc}$ can utilize the reference angle of the local $0 d q$ frame of the connected generator or IBR.
\color{black}
\color{black}

\vspace{1mm}
\textit{2) Simulating Micro-State}
 
 Consider the previous dynamic system \eqref{stiff_ode} which is Lipschitz on $\mathbb{R}^{n_x}$:
\vspace{-2mm}
\begin{equation}\label{or}
\begin{aligned}
    \dot{\mathbf{x}}=\mathbf{f}(\mathbf{x},t).
\end{aligned}
\vspace{-2mm}
\end{equation}
The micro process involves solving \eqref{or} in $\left[t_n, t_n+\eta\right]$. Given an initial state $\mathbf{x}_0$ at $t_n$ by reconstruction $\mathbf{x}_0=R\mathbf{u}_{\epsilon,n}$, the solution of \eqref{or} is characterized by:
\vspace{-1mm}
\begin{equation}
\begin{aligned}
\mathbf{x}\left(t\right)=\mathbf{\Phi}{(t,\bf{x_0})},
\end{aligned}
\vspace{-2mm}
\end{equation}
which satisfies the corresponding initial value condition:
\vspace{-1mm}
\begin{equation}
\begin{aligned}
& \mathbf{x}\left(t_n\right)=\mathbf{\Phi}{(t_n,\bf{x_0})}=\bf{x_0}.
\end{aligned}
\vspace{-1mm}
\end{equation}
Suppose that $\mathbf{\Phi}{(t,\bf{x_0})}$ is sufficiently smooth in a neighborhood $\mathcal{B}_\delta(t_{n},\bf{x_0})$, its power series can be expanded up to a certain order ${L}$ within $\mathcal{B}_\delta(t_{n},\bf{x_0})$:
\vspace{-2mm}
\begin{equation}
\label{tay}
\begin{split}
\mathbf{\Phi}(t,\mathbf{x_0}) &=\sum_{k=0}^{{L}}{\frac{1}{k!}\frac{\partial^{k}{\mathbf{\Phi}(t,\mathbf{x_0})}}{\partial{t}^{k}}\bigg|_{t=t_{n}}(t-t_{n})^k}\\
    &+\mathbf{R}(\mathbf{\Theta)}(t-t_{n})^{{L}+1},
\end{split}
\vspace{-1mm}
\end{equation}
where $\mathbf{R}(\mathbf{\Theta)}$ is a constant related to order ${L}$, $t_{n}$, and $t$. The approximated solution $\tilde{x}$ can be rewritten in a compact form:
\vspace{-2mm}
\begin{equation}
\label{tay2}
\vspace{-2mm}
\begin{aligned}
&\mathbf{x}(t) \approx  \tilde{\mathbf{x}}(\mathbf{x}_0,L,h):=\sum_{k=0}^{{L}}{\mathbf{X}[k]h^k} ,  \\
&\mathbf{X}[k]={\frac{1}{k!}\frac{\partial^{k}{{\mathbf{\Phi}}(t,\mathbf{x_0})}}{\partial{t}^{k}}\bigg|_{t=t_{n}}},  
\end{aligned}
\end{equation}
where $h=t-t_n$ represents the step size. The main task of the DT algorithm is to find a recursive way to solve $\mathbf{X}[k]$ up to a desirable order ${L}$ in a certain time interval around $t_{n}$. The basic steps of the DT algorithm are shown as follows:

\vspace{2mm}
\noindent{\textbf{i) Derivation of DT Formulation.}} Given an ODE system \eqref{or}, the corresponding DT formulation can be derived utilizing the characterization of the original ODE system and linear independence of polynomials' coefficients. Some fundamental rules to transform time domain functions into DT forms are summarized by \cite{liu2019power,liu2019solving}. For instance, consider vectorized functions $\boldsymbol{\sigma}(t)$, $\boldsymbol{\xi}(t)$, $\boldsymbol{\gamma}(t)$, $\mathbf{g}_{s }(t)$, $\mathbf{g}_{c }(t)$ with vector coefficients $\boldsymbol{\Sigma}[k]$, $\boldsymbol{\Gamma}[k]$, $\mathbf{G}_{c }[k]$,  $\mathbf{G}_{s }[k]$, mappings of the $k$-th order coefficients of compositional functions are:
\vspace{-2mm}
\begin{equation*}
\begin{aligned}
    &\text{ i) } \boldsymbol{\gamma}(t) =\boldsymbol{\sigma}(t) \pm \boldsymbol{\xi}(t) \leftrightarrow \boldsymbol{\Gamma}[k]=\boldsymbol{\Sigma}[k] \pm \boldsymbol{\Xi}[k];\\ 
    &\text{ii) } \boldsymbol{\gamma}(t) =\boldsymbol{\sigma}^\top \leftrightarrow \boldsymbol{\Gamma}[k]=\boldsymbol{\Sigma}[k]^\top;\\
    &\text { iii) } \mathbf{g}_{c}(t)=\cos (\boldsymbol{\sigma}(t))\leftrightarrow \mathbf{G}_{c}[k]= \sum_{j=1}^{k} \frac{j}{k}\boldsymbol{\Gamma}[j] \odot \mathbf{G}_{s}[k-j];\\
    &\text { iv) } \mathbf{g}_{s }(t)=\sin (\boldsymbol{\sigma}(t)) \leftrightarrow \mathbf{G}_{s }[k]=\sum_{i=1}^k \frac{j}{k}\boldsymbol{\Gamma}[j] \odot \mathbf{G}_{c }[k-j] \text {; }
\end{aligned}
\vspace{-2mm}
\end{equation*}
where $\odot$ represents the element-wise product of vectors.
With these DT rules and more shown in \cite{liu2019power}, the dynamic system $\dot{\mathbf{x}}=\mathbf{f}(\mathbf{x},t)$ can be transformed into DT-form $\mathbf{F}[k]$ derived in \cite{liu2019solving}:
\vspace{-2mm}
\begin{equation}
\begin{aligned}
\label{im_DT}
(k+1) \mathbf{X}[k+1] & =\mathbf{F}[k] \coloneqq \mathbf{F}(\mathbf{X}[j]), \quad j=0, \cdots, k ,
\end{aligned}
\end{equation}
which allows us to solve the DT formulation instead of the original system. Note that this is a general way to solve $\mathbf{x}(t)$ up to an arbitrary desirable order approximation. Then, coefficients $\mathbf{X}[k]$ are solved based on the explicit DT form \eqref{im_DT} recursively from $k=1$ to ${L}$, providing an ${L}^{\text{th}}$ order approximation solution of $\mathbf{\Phi}$ in $\mathcal{B}_\delta(t_{n},\bf{x_0})$.

\vspace{2mm}
\noindent{\textbf{ii) Variable-step micro-process simulation.}} This paper focuses on variable-step micro-process simulation since fixed-step algorithms have been widely adopted. After calculating power coefficients $\mathbf{X}[k]$, a forward result at $t_{n}+h$ is obtained by evaluating power series \eqref{tay2}, thanks to the desirable convergence and accuracy of power series approximations with a relative high ${L}$, the predefined step size $h$ is relatively large compared with traditional numerical methods. Consider solving \eqref{or} requires multiple time steps, the step size $h_m$ is defined as ${m}^\text{th}$ step at $n^\text{th}$ interval, so the result at ${m+1}^\text{th}$ step at $n^\text{th}$ interval is:
\vspace{-2mm}
\begin{equation}
\label{truncated series}
\begin{aligned}
\mathbf{x}_{m+1} = \tilde{\mathbf{x}}(\mathbf{x}_m,L,h_m).\\
\end{aligned}
\vspace{-2mm}
\end{equation}
Note that $\mathbf{x}_{m} = \mathbf{X}[0]$ is at $t_n+\sum_{j=1}^{m-1}h_j$ with $m\geq 1$, so suppose we are at $t_n+\sum_{j=1}^{m-1}h_j$ and the initial value $\mathbf{x}_{m}$ is known, then \eqref{truncated series} can be directly calculated. However, to maintain the accuracy towards uncertain dynamics, it is helpful to control time step $h_{m}$ within a certain range, so the difference error is bounded\cite{ENRIGHT2000159}:
\vspace{-2mm}
\begin{equation}
\label{defect control}
\begin{aligned}
    &\lVert \dot{\mathbf{x}}_{m+1}-\mathbf{f}(\mathbf{x}_{m+1},t_n+\sum_{j=1}^{m}h_j) \rVert_{\infty}&<\epsilon_1 ,\\
\end{aligned}
\vspace{-1mm}
\end{equation}
where $\epsilon_1$ is a predefined threshold of the error, $\dot{\mathbf{x}}_{m+1}$ can be calculated by \eqref{truncated series}. While \eqref{defect control} provides a general way to design control strategies, it is not necessary to control all equations when the dynamics of certain state variables are known much faster than others, especially in multi-scale problems. In this paper, the defect control of time steps of \eqref{Network} is proposed. The motivation is that the dynamics of internal state variables of generators and IBRs are slower, so the step size is highly limited by \eqref{Network}. By utilizing the DT algorithm, \eqref{defect control} has an explicit form provided by the following theorem. For the notation simplicity, define $\boldsymbol{\psi}:=\left[\mathbf{v}_{abc},~~\mathbf{w}_{abc}\right]^{\top}$ and $\boldsymbol{\lambda}:= \left[\mathbf{i}_{abc},~~
\mathbb{O}_{3E}\right]^{\top}$, also, the corresponding $k^\text{th}$ order coefficients are $\boldsymbol{\Psi}[k]$ and $\boldsymbol{\Lambda}[k]$, namely, $\boldsymbol{\psi}_{m+1}= \tilde{\boldsymbol{\psi}}(\boldsymbol{\psi}_m,L,h_m)$ and $\boldsymbol{\lambda}_{m+1}=\tilde{\boldsymbol{\lambda}}(\boldsymbol{\lambda}_m,L,h_m)$
and
\begin{equation}
   \mathbf{A}_{eq} :=\left[\begin{array}{ll}
\mathbf{C} & \\
& \mathbf{L}
\end{array}\right]^{-1}\left[\begin{array}{cc}
-\mathbf{G} & -\mathbf{B} \\
\mathbf{B}^{\top} & -\mathbf{R}
\end{array}\right],~
\mathbf{B}_{eq} :=\left[\begin{array}{ll}
\mathbf{C} & \\
& \mathbf{L}
\end{array}\right]^{-1}.
\end{equation}
Based on \eqref{defect control}, define the error induced by \eqref{truncated series} in \eqref{Network} as
\begin{equation}
    \mathcal{E}_{m}:= \lVert \dot{\boldsymbol{\psi}}_m-\mathbf{A}_{eq} \boldsymbol{\psi}_m-\mathbf{B}_{eq} \boldsymbol{\lambda}_m\|_{\infty}.
\end{equation}
The following theorem is provided.
\begin{theorem}
    Given the linear form of \eqref{Network} and the $L^\text{th}$ order DT algorithm, the error in \eqref{truncated series} can be estimated on the fly of the simulation once the highest coefficients are calculated, i.e.,
\begin{equation}
\begin{aligned}
\label{error_est}
\mathcal{E}_{m+1}&=
\lVert \dot{\boldsymbol{\psi}}_{m+1}-\mathbf{A}_{eq} \boldsymbol{\psi}_{m+1}-\mathbf{B}_{eq} \boldsymbol{\lambda}_{m+1}\|_{\infty}\\
&=\|\mathbf{A}_{eq} \boldsymbol{\Psi}[L]+\mathbf{B}_{eq} \boldsymbol{\Lambda}[L]\|_{\infty}(h_{m})^{L}.   
\end{aligned}
\end{equation}
\end{theorem}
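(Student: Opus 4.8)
The plan is to exploit that, for the \emph{linear} network equation \eqref{Network}, the differential transformation recursion \eqref{im_DT} is exact and closes on the Taylor coefficients. Writing \eqref{Network} compactly as $\dot{\boldsymbol{\psi}} = \mathbf{A}_{eq}\boldsymbol{\psi} + \mathbf{B}_{eq}\boldsymbol{\lambda}$, the DT rule for a linear map gives $(k+1)\boldsymbol{\Psi}[k+1] = \mathbf{A}_{eq}\boldsymbol{\Psi}[k] + \mathbf{B}_{eq}\boldsymbol{\Lambda}[k]$, and this identity is precisely what the micro-solver enforces for $k = 0,1,\dots,L-1$ when it builds the degree-$L$ truncations $\boldsymbol{\psi}_{m+1} = \sum_{k=0}^{L}\boldsymbol{\Psi}[k]h_m^{k}$ and $\boldsymbol{\lambda}_{m+1} = \sum_{k=0}^{L}\boldsymbol{\Lambda}[k]h_m^{k}$ of \eqref{truncated series}. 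First I would state this recursion explicitly and note that $\boldsymbol{\Lambda}[L]$ is available at the end of the step because $\mathbf{i}_{abc}$ is itself a state variable of \eqref{current injection} and is therefore expanded to the same order $L$ inside $\boldsymbol{\lambda}=[\mathbf{i}_{abc},\,\mathbb{O}_{3E}]^{\top}$.

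Next I would compute the two sides of the defect in \eqref{defect control} as polynomials in $h_m$. Differentiating the truncated series term by term gives $\dot{\boldsymbol{\psi}}_{m+1} = \sum_{k=1}^{L} k\,\boldsymbol{\Psi}[k]h_m^{k-1} = \sum_{k=0}^{L-1}(k+1)\boldsymbol{\Psi}[k+1]h_m^{k}$, a polynomial of degree $L-1$, while $\mathbf{A}_{eq}\boldsymbol{\psi}_{m+1} + \mathbf{B}_{eq}\boldsymbol{\lambda}_{m+1} = \sum_{k=0}^{L}\left(\mathbf{A}_{eq}\boldsymbol{\Psi}[k] + \mathbf{B}_{eq}\boldsymbol{\Lambda}[k]\right)h_m^{k}$ has degree $L$. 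Subtracting the two and applying the recursion $(k+1)\boldsymbol{\Psi}[k+1] = \mathbf{A}_{eq}\boldsymbol{\Psi}[k] + \mathbf{B}_{eq}\boldsymbol{\Lambda}[k]$ to the coefficients of $h_m^{0},\dots,h_m^{L-1}$ makes every term of order below $L$ cancel identically, leaving only the order-$L$ term $-\left(\mathbf{A}_{eq}\boldsymbol{\Psi}[L] + \mathbf{B}_{eq}\boldsymbol{\Lambda}[L]\right)h_m^{L}$; taking the $\infty$-norm then yields \eqref{error_est}.

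Intuitively, the defect at order $L$ is nonzero precisely because the solver never computes $\boldsymbol{\Psi}[L+1]$, so there is no Taylor coefficient left to annihilate $\mathbf{A}_{eq}\boldsymbol{\Psi}[L] + \mathbf{B}_{eq}\boldsymbol{\Lambda}[L]$ — this is the leading truncation-error coefficient, and it is conveniently already expressed through quantities the micro-solver has on hand. The only point requiring care, and the main (minor) obstacle, is the index bookkeeping: the recursion \eqref{im_DT} is valid up to $k = L-1$ (not $k = L$) for the truncated expansion, and one must confirm that $\boldsymbol{\Lambda}[L]$ — the $L$-th DT coefficient of the generator/IBR interface current injection — is genuinely produced by the micro-process so that the right-hand side of \eqref{error_est} is computable ``on the fly'' without extra work. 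Everything else is a direct substitution, and the linearity of \eqref{Network} is exactly what prevents higher-order cross terms (which would appear for a nonlinear $\mathbf{f}$) from spoiling the clean cancellation.
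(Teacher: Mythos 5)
Your proposal is correct and follows essentially the same route as the paper's proof: write the truncated series for $\boldsymbol{\psi}_{m+1}$ and $\boldsymbol{\lambda}_{m+1}$, substitute into the defect, and use the DT recursion of the linear network equation to cancel all coefficients of $h_m^{0},\dots,h_m^{L-1}$, leaving only the order-$L$ residual $\mathbf{A}_{eq}\boldsymbol{\Psi}[L]+\mathbf{B}_{eq}\boldsymbol{\Lambda}[L]$. If anything, your bookkeeping is slightly more careful than the paper's (you retain the $(k+1)$ factor in the recursion and in the term-by-term derivative, which the paper's displayed equations drop), and your observation that $\boldsymbol{\Lambda}[L]$ must actually be produced by the micro-solver for the bound to be computable on the fly is a valid point the paper leaves implicit.
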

\begin{proof}
    The corresponding DT formulation of \eqref{Network} is
\begin{equation}
\label{DTnet}
    \boldsymbol{\Psi}[k+1]=\mathbf{A}_{eq} \boldsymbol{\Psi}[k]+\mathbf{B}_{eq} \boldsymbol{\Lambda}[k], \forall~0\leq k\leq L-1.
\end{equation}
Notice 
\begin{equation}
\begin{aligned}
&\boldsymbol{\psi}_{m+1} =\sum_{k=0}^{L} \boldsymbol{\Psi}[k]\left(h_{m}\right)^k,\\
&\boldsymbol{\lambda}_{m+1} =\sum_{k=0}^{L} \boldsymbol{\Lambda}[k]\left(h_{m}\right)^k.
\end{aligned}
\end{equation}
Substituting them into \eqref{Network}, we have 
\begin{equation}
\begin{split}
    \mathcal{E}_{m+1}=&  \lVert \dot{\boldsymbol{\psi}}_m-\mathbf{A}_{eq} \boldsymbol{\psi}_m-\mathbf{B}_{eq} \boldsymbol{\lambda}_m\|_{\infty}\\
            =& \lVert\sum_{k=0}^{L-1} \boldsymbol{\Psi}[k+1]\left(h_{m}\right)^k\\
            &-\sum_{k=0}^{L}\left[ \mathbf{A}_{eq} \boldsymbol{\Psi}[k]\left(h_{m}\right)^k-\mathbf{B}_{eq}  \boldsymbol{\Lambda}[k]\left(h_{m}\right)^k\right]\|_{\infty}\\
            =&\lVert\{\sum_{k=0}^{L-1}\left( \boldsymbol{\Psi}[k+1]-\mathbf{A}_{eq} \boldsymbol{\Psi}[k]-\mathbf{B}_{eq}  \boldsymbol{\Lambda}[k]\right)\\
            &+\mathbf{A}_{eq} \boldsymbol{\Psi}[L]+\mathbf{B}_{eq} \boldsymbol{\Lambda}[L] \}\left\|_{\infty}(h_{m}\right)^k.
\end{split}
\end{equation}
With \eqref{DTnet}, the first sum term is zero and we have           
$\mathcal{E}_{m+1}=\|\mathbf{A}_{eq} \boldsymbol{\Psi}[L]+\mathbf{B}_{eq} \boldsymbol{\Lambda}[L]\|_{\infty}(h_{m})^{L}.$
\end{proof}
\color{black}
\textbf{Theorem 2} ensures the accuracy and numerical stability of the micro-process in the HMM approach when applied to any dynamical system modeled by ODEs, regardless of the simulation platform.     
\color{black}
Note that different from other numerical methods, \eqref{error_est} provides an explicit way to compute an estimated error, then the step size $h_{m}$ with the satisfactory error can be obtained by solving $\mathcal{E}_{m+1}=\epsilon_1$. 

\vspace{2mm}
\noindent{\textbf{iii) Evaluation at a New Point.}} After calculating the forward result, i.e., $\mathbf{x}_{m+1}$ at $t_n+\sum_{j=1}^{m}h_j$, new 0$^\text{th}$ coefficients $\mathbf{X}[0]$ at $t_n+\sum_{j=1}^{m}h_j$ are obtained. Repeat the previous steps, $\mathbf{x}_{m+2}$ can then be computed until the terminated time $t_n+\eta$ for the $n^\text{th}$ micro process is achieved. 

\textit{3) Kernel Averaging}

After time series of micro-state $\mathbf{x}(t)$ are simulated, the macro-state variables can be computed by $\mathbf{u}_{\epsilon}=Q(\mathbf{x})$. The proposed kernel averaging method provides a general way to approximate slow dynamics behind the fast dynamics for all state variables, but only a certain group of state variables are necessary to be averaged to save the computational time and complexity.
Recall state variables are separated into two groups: $\mathbf{x}=\left[\mathbf{x}^{I},~\mathbf{x}^{II} \right]^{\top}$ where $\mathbf{x}^{II}=[\mathbf{v}_{abc},~\mathbf{w}_{abc},~\mathbf{i}_{abc}]^\top$. Here vector fields of two groups are defined as $\mathbf{f}^{I}$ and $\mathbf{f}^{II}$ respectively. The convolution operation \eqref{kernel conv} with $\mathbf{u}^{II}$ is considered, and the corresponding force is estimated by \eqref{kernel conv}
\vspace{-2mm}
\begin{equation}
    \bar{\mathbf{f}}^{II}\left(t_{n}^{\prime}\right) \approx K^{\prime} * \mathbf{u}_{\epsilon}^{II},
\vspace{-1mm}
\end{equation}
where $\bar{\mathbf{f}}^{II}$ is independent with stiff parameters $\epsilon$ in \eqref{stiff_ode}. Then, the macro step-size can be large and is not limited by $\mathbf{x}^{II}$ anymore. Note the accuracy of the remaining part $\mathbf{x}^{I}$ will be enhanced in the macro part which will be introduced momentarily.

\subsection{Proposed Algorithms for the Macro-Process}
\color{black}
This section introduces the detailed macro process for both fixed-step and variable-step simulations. 

\subsubsection{Fixed-step macro-process simulation}
After the kernel averaging in the micro-process, the value of the macro-state $\mathbf{u}$ at time $t_{n+1}$, i.e.,  $\mathbf{u}_{n+1}$, is well estimated by the Forward Euler scheme: 
\vspace{-2mm}
\begin{subequations}
\label{FE pre0}
\begin{align}
\label{FE pre}
    &\mathbf{u}^{II}_{n+1}=\mathbf{u}^{II}_{\epsilon}\left(t_{n}^{\prime}\right)+(H-\eta)\bar{\mathbf{f}}^{II}\left(t_{n}^{\prime}\right),\\ 
\label{FE pre2}
    &\mathbf{u}^{I}_{n+1}=\mathbf{u}^{I}_{\epsilon}\left(t_{n}^{\prime}\right)+(H-\eta)
          \mathbf{f}_{}^{I}\left(\mathbf{x}({t_{n}^{\prime})}, t_{n}^{\prime}\right),
\end{align}
\end{subequations}
where $\mathbf{u}_{\epsilon} =Q\mathbf{x}$ and the partial vector field $\mathbf{f}_{}^{I}\left(\mathbf{x}_{{n}^{\prime}}, t_{n}^{\prime}\right)$ can be obtained directly from the micro process simulation. While it depends on $\epsilon$ implicitly, it generally exhibits slow dynamics compared with $\mathbf{x}^{II}$. So using \eqref{FE pre2} without kernel averaging is acceptable, which also explains the reason for using identical operators for $\mathbf{x}^{I}$ previously. With $\mathbf{u}_{n+1}$ from \eqref{FE pre0} in hand, the accuracy of $\mathbf{f}_{}^{I}$ can be further enhanced by averaging two steps' vector fields by
\vspace{-2mm}
\begin{equation}
    \bar{\mathbf{f}}^{I}\left(t_{n}^{\prime}\right):=\frac{1}{2}\left[
    \mathbf{f}_{}^{I}\left(\mathbf{x}({t_{n}^{\prime})}, t_{n}^{\prime}\right)+
    \mathbf{f}_{}^{I}\left(R\mathbf{u}_{n+1}, t_{n+1}\right)\right].
\end{equation}
Then with $\bar{\mathbf{f}}=[\bar{\mathbf{f}}^{I},\bar{\mathbf{f}}^{II}]^\top$ in hand, the vector form of the macro process can be written as
\vspace{-2mm}
\begin{equation}
    \mathbf{u}_{n+1}=\mathbf{u}_{\epsilon}({t_{n}^{\prime})}+(H-\eta)\bar{\mathbf{f}}\left(t_{n}^{\prime}\right).
\vspace{-1mm}
\end{equation}
Compared with the step size $h_m$ in the micro process, the interval of a macro process $H$ is much larger which can accelerate the simulation speed without causing numerical stability problems.

\subsubsection{Variable-step macro-process simulation}

The macro-step can be controlled by an integral controller, enabling variable-step macro-process simulations.With \eqref{FE pre0} in hand, define the macro step size $Mh:= H-\eta$, $\mathbf{f}_{n}:=[ \mathbf{f}_{}^{I}\left(\mathbf{x}(t_{n}^{\prime}), t_{n}^{\prime}\right),~\bar{\mathbf{f}}^{II}\left(t_{n}^{\prime}\right)]^\top$, a zeroth-order error at $n^\text{th}$ step can be estimated by 
\begin{equation}
\label{error2}
r:=\left(\max \limits_{1\leq i \leq \mathbb{R}^{n_u}}\left|\frac{{Mh_n\mathbf{f}_{n}}}{|{[\mathbf{u}_\epsilon(t_{n}^{\prime})]}_i|+1}\right|\right) .
\end{equation}
Then, the local truncation error considering high-order errors and the control strategy of the macro-step can be estimated as follows, which can be used to control the truncation error induced by \eqref{FE pre0}
 \begin{align}
\label{s1}
\begin{split}
    e_n&=\frac{r}{1-r},\\
    \rho_{n+1}&=\min\left\{\left(\frac{Tol}{e_n}\right),\rho_{\text{max}}\right\},\\
    Mh_{n+1}&=\min\left\{\rho_{n+1} Mh_n,Mh_{\text{max}}\right\},
\end{split}
\end{align}
where $Mh_{\text{max}} =0.04$ and $\rho_{\text{max}}=1.05$ \cite{KaiyangPI}. Then the variables-step macro-process can be 
\begin{equation}
    \mathbf{u}_{n+1}=\mathbf{u}_{\epsilon}({t_{n}^{\prime})}+Mh_n\bar{\mathbf{f}}\left(t_{n}^{\prime}\right).
\vspace{-1mm}
\end{equation}
\color{black}
\color{black}
\subsection{Comparison with Other Multi-Timescale Methods}  

To accelerate EMT simulations, multi-rate EMT methods have been proposed and applied for decades \cite{Semlyen93,Mattavelli97,Qiuhua2018,Dewu2018,Yupeng2020,Huanfeng24}. To use these methods, the EMT model of a power system must be partitioned into a critical area that retains the detailed EMT model and an external area in which model reduction is required prior to simulations to generate a simplified model based on positive-sequence phasors\cite{Qiuhua2016,Qiuhua2018} or dynamic phasors \cite{Mattavelli97}. For instance, in phasor-EMT co-simulations, a phasor-based transient stability model is often adopted for the external area\cite{Qiuhua2016}. The critical and external areas are then co-simulated using either different simulators or settings, such as step sizes, via an interfacing algorithm. While these co-simulation methods are faster than a full EMT simulation of the entire system, they have notable limitations. First, the critical EMT area is typically determined offline to include, e.g., an inverter-based solar farm or wind farm for accurate simulations. Identifying the boundary of such a critical area can be challenging when IBRs highly penetrate and become dense in the power network\cite{Yupeng2020}. Second, interfacing two simulators across the boundary often introduces larger errors near the boundary \cite{Yupeng2020} or even leads to divergence\cite{Shengtao2012}. While increasing the number of iterations between the two simulators at each time step can reduce these errors, it also significantly increases the computational time. Third, to reduce the number of state variables, any model reduction algorithm must assume which dynamics are unimportant or ignorable. As a result, co-simulations may be unreliable or omit significant details if any assumptions are inaccurate or the model is not appropriately reduced. 

The proposed HMM framework takes a fundamentally different approach with the following advantages. First, the HMM approach does not require pre-defining model partition boundaries or reducing the model prior to simulation. Instead, it decomposes the simulation process in time by automatically alternating between a micro-process for detailed EMT simulation and a macro-process focusing on slower dynamics, both conducted on the same EMT model of the original system. Second, the HMM approach adaptively switches between the two processes at different timescales and adjusts the step size to ensure accuracy for both fast and slow dynamics. Theorem 1 provides theoretical guidance for selecting the lengths and step sizes of the micro-process and macro-process to achieve the desired performance. Third, the HMM framework can accommodate or complement other accelerated EMT simulation methods, such as the phasor-EMT co-simulation method. For example, while the external area is simulated using a phasor-based simulator, the critical EMT area can be further accelerated with the HMM approach.

\color{black}
\section{Case Studies}
In this section, the proposed HMM approach is first validated by simulations on a two-area system and the IEEE 39-bus system. The simulation results are benchmarked with simulations using the fourth-order Runge-Kutta (RK4) method with a small time step $h = 5~\mu\text{s}$. Then, the approach is tested on a large 390-bus system and compared with commercial tools including PSCAD and RK45.
 \color{black}All proposed algorithms are implemented in Matlab 2023b using scripts and tested on a desktop computer with $13^\text{th}$-generation Intel i7-13700K processor and 32GB RAM.\color{black}
\subsection{Test on Kundur's Two-area System}
The case study on this small system is to verify \eqref{Complexity} and \eqref{speedup} on the complexity and speedup using the proposed HMM approach. Replace generator G1 by the IBR model as shown in Fig.~\ref{twoarea}. Apply the HMM approach to simulate three contingencies (named S1, S2, and S3) which are three-phase bus grounding faults at buses 7, 8, and 9, lasting for 0.1 s and then being cleared after 5 cycles without losing any component. These contingencies are simulated for 10s.
\begin{figure}[!ht]
\centering
\includegraphics[width=0.7\columnwidth]{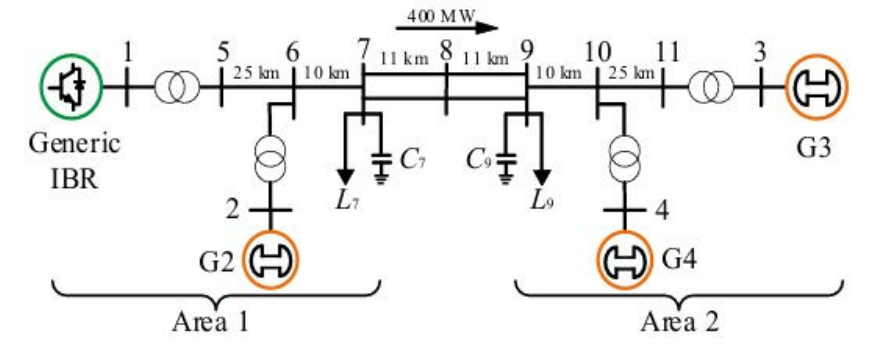}
\caption{Kundur's two-area system}
\label{twoarea}
\vspace{-3mm}
\end{figure}  
\begin{table}[!ht]
\begin{center}
\caption{Parameters for Kundur's Two-area System}
\label{p2area}
\begin{tabular}{c  c  c  c  c  c }
 \toprule
${h}$ & $L$ & $\eta$     & $C$ &  $D$ & $\theta_{\text {global}}$\\
\midrule
330 $\mu$s   & 30     & 0.0264 s & 3.0739 &  1.25       &  $\theta_1$        \\
\bottomrule
\end{tabular}
\end{center}
\end{table}

The parameters of the HMM approach are given in Table~\ref{p2area} for all contingencies. Selection of these parameters are explained as follows. Here $h$ is the time step for simulating the micro-state using  the DT-based the micro-solver, and is assumed constant to verify the complexity and speedup of the approach with different lengths of $H$. The selection of pair $(h,L)$ follows \cite{xiong2023semi} which demonstrates good accuracy and speed for the DT algorithm. The micro-interval $\eta$ should cover at least one period for a current or voltage waveform, i.e., 0.0167s under 60 Hz. Thus, $\eta$ is selected to cover 80 steps of $h$. $D$ is selected according to \cite{Ariel2008AMM}, and $C$ is uniquely determined to satisfy \eqref{intgral_condition}. Furthermore, the $\theta_{\text {global}} =\theta_1$ from generator G1 for simplicity. Although the proposed HMM approach is expected to provide significant acceleration in simulations, it may be also desirable to run a detailed EMT simulation for a brief initial time window to capture all fast transient dynamics under the fault. Thus, the HMM approach is activated at 1s afte the disturbance and applied for the rest of the simulation period.
\begin{figure}[!ht]
\centering
\subfloat[]{\includegraphics[width=0.48\columnwidth]{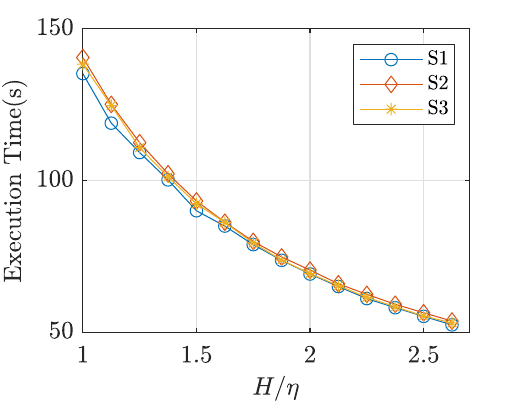}%
\label{fig:2area_time}}
\vspace{-5mm}
\subfloat[]{\includegraphics[width=0.48\columnwidth]{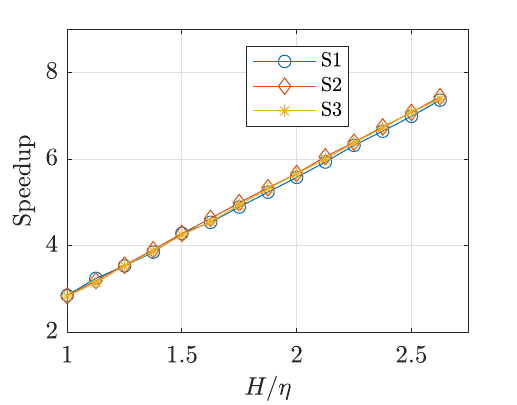}%
\label{fig:2area_speed}}
\hfil
\subfloat[]{\includegraphics[width=0.48\columnwidth]{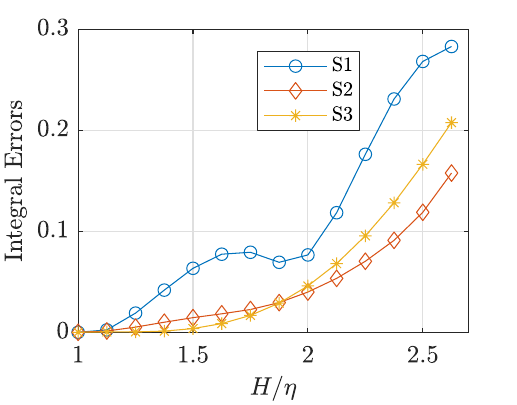}%
\label{fig:2area_error}}
\subfloat[]{\includegraphics[width=0.48\columnwidth]{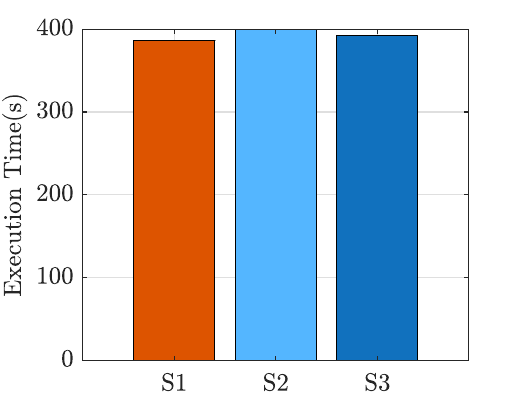}\label{fig:rktime}}%
\caption{Convergence analysis for HMM in two-area system under three scenarios. (a) Execution time under different $H$. (b) Speedup compared with the reference simulations. (c) Integral errors. (d) Execution time for reference simulations using a 10$\mu$s time step.}
\vspace{-1mm}
\label{fig:convergence_twoarea}
\end{figure}

To verify \eqref{Complexity} and \eqref{speedup}, different values of $H$ are tested, and the corresponding execution times and the speedup factor for three contingencies are shown in Fig. \ref{fig:2area_time} and Fig. \ref{fig:2area_speed}. The reference simulations use a 10$\mu$s time step as the benchmark of speedup of the proposed method. Fig.~\ref{fig:2area_error} confirms the convergence of the simulations by an error defined as the averaged integral of the infinity norm of the difference between the benchmark $\mathbf{x_{bh}}(t_n+\sum_{j=1}^{m-1}h_j)$ and $\mathbf{x_{m}}$ for all micro intervals along the simulation time $T$.

It can be concluded that for all contingencies, increasing $H$ can accelerate the simulation, and the speedup ratio is proportional to $H/\eta$, matching exactly the definition in \eqref{speedup}. Also, note that the complexity is proportional to the reciprocal of speedup when the micro step-size $h$ is fixed, so it is also proved well-defined as a dual form of the speedup.

In particular, simulation results on several state variables are shown in Fig.~\ref{fig:res_twoarea}, including rotor angle speeds and instantaneous line currents for S1, S2, and S3 with $H=2.625\eta$. From these figures, the micro-state and macro-state of the system are simulated alternatively with their respectively time steps $h$ and $H$. All simulated micro-state results match accurately the benchmark results. More importantly, for a majority of the simulation period, a much larger macro time step $H$, instead of $h$, is used without causing numerical instability. This leads to over eight-fold acceleration. Note that without using any reduced models created ahead of simulations, the proposed approach successfully reduces the EMT model \eqref{stiff_ode} on the fly of simulation  for significant speedup.

\begin{figure}[!ht]
\centering
\hspace{-5mm}
\subfloat[]{\includegraphics[width=0.36\columnwidth]{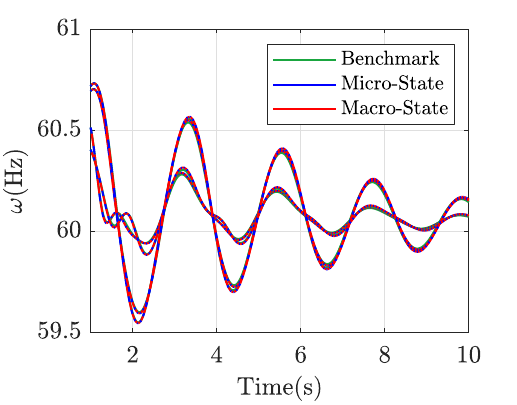}}%
\label{fig:S1Fregen}
\hspace{-5mm}
\subfloat[]{\includegraphics[width=0.36\columnwidth]{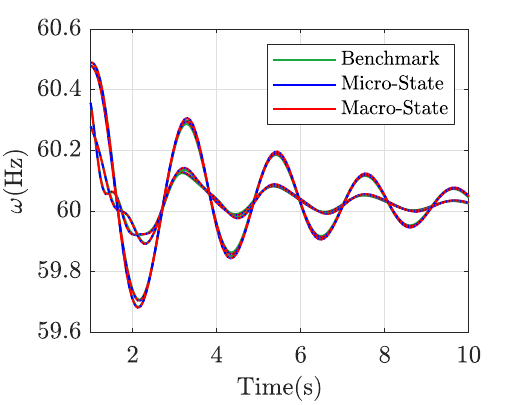}}%
\label{fig:S2Fregen}
\hspace{-5mm}
\subfloat[]{\includegraphics[width=0.36\columnwidth]{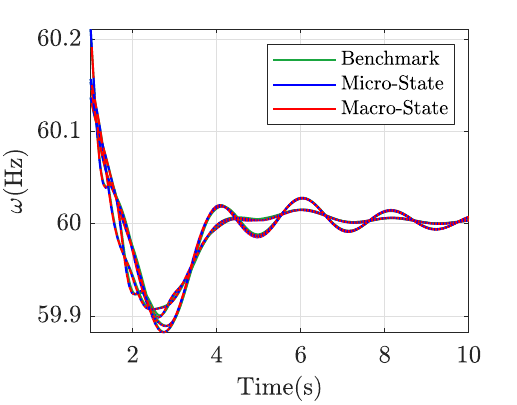}}%
\label{fig:S3Fregen}
\hspace{-5mm}

\hspace{-10mm}
\subfloat[]{\includegraphics[width=0.55\columnwidth]{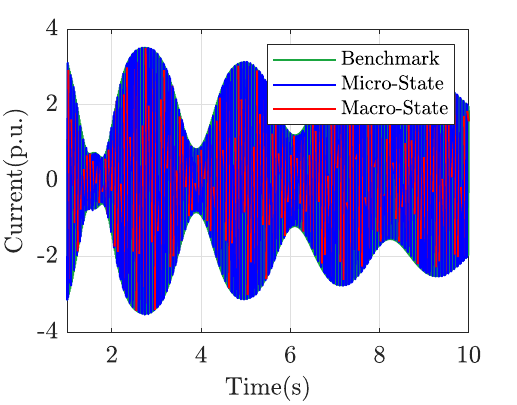}%
\label{fig:S1Current}}
\hspace{-5mm}
\subfloat[]{\includegraphics[width=0.55\columnwidth]{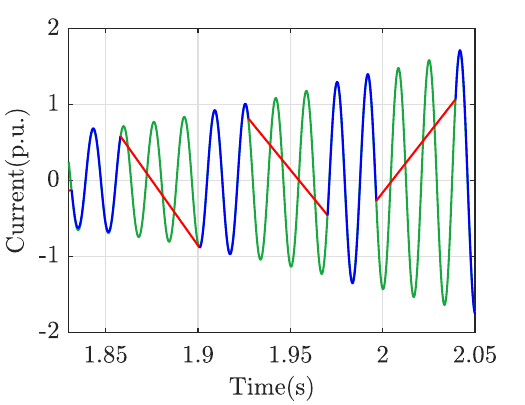}}%
\hspace{-10mm}

\hspace{-10mm}
\subfloat[]{\includegraphics[width=0.55\columnwidth]{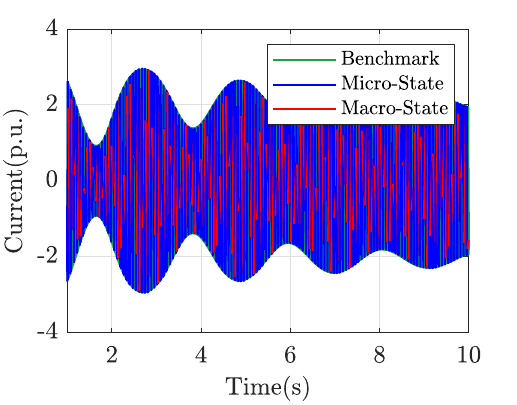}%
\label{fig:S2Current}}
\hspace{-5mm}
\subfloat[]{\includegraphics[width=0.55\columnwidth]{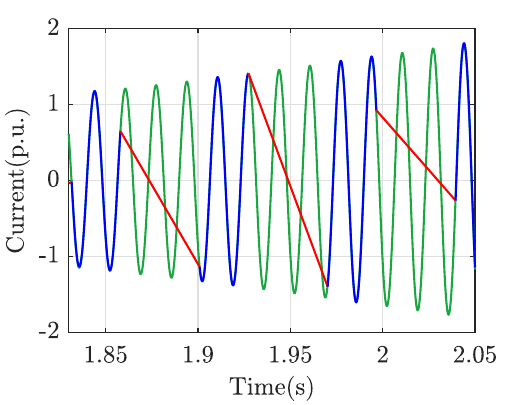}}%
\hspace{-10mm}

\hspace{-10mm}
\subfloat[]{\includegraphics[width=0.55\columnwidth]{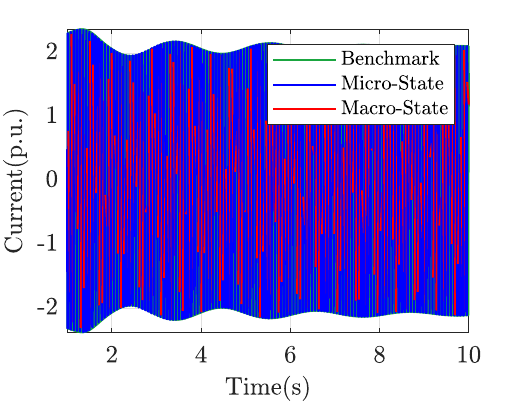}%
\label{fig:S3Current}}
\hspace{-5mm}
\subfloat[]{\includegraphics[width=0.55\columnwidth]{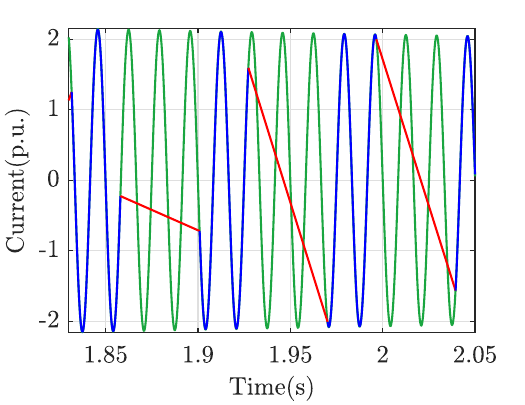}}%
\hspace{-10mm}

\caption{Simulation results for the two-area system under S1, S2, and S3. (a) Rotor angular frequency in S1. (b) Rotor angular frequency in S2. (c) Rotor angular frequency in S3. (d) Phase A current of the branch 7-8 for S1. (e) Zoomed-in view of (d). (f) Phase A current of the branch 7-8 for S2. (g) Zoomed-in view of (f). (h) Phase A current of the branch 7-8 for S3. (i) Zoomed-in view of (h).}
\vspace{-4mm}
\label{fig:res_twoarea}
\end{figure}

\subsection{Test on the IEEE 39-bus System}
\begin{figure}[!ht]
\centering
\includegraphics[width=0.65\columnwidth]{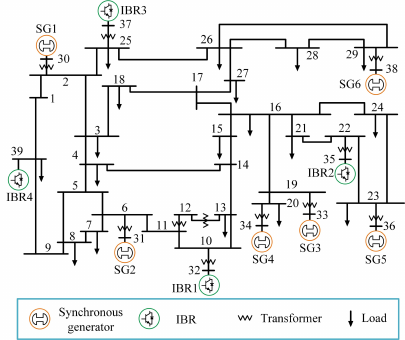}
\caption{A modified IEEE 39-bus System}
\label{10_39diag}
\vspace{-3mm}
\end{figure}  
The scalability of the proposed HMM approach is tested on a modified IEEE 39-bus system that has its penetration level of renewable energy reaching 50\% by replacing four generators by IBRs as shown in Fig.~\ref{10_39diag}. Simulate these two contingencies:
\begin{itemize}
  \item S4: At $t=$0.1 s, a three-phase grounding fault is applied at bus 16 and then cleared after 5 cycles; at $t=$8 s, the load at bus 3 is disconnected and then reconnected after 5 cycles; the simulation ends at $t=$16 s.
  \item S5: At $t=$0.1 s, generator SG4 at bus 34 is tripped permanently; the simulation ends at $t=$8 s.
\end{itemize}
Use the same set of parameters in Table~\ref{p2area}. Here, the proposed defect control in micro-process is considered with a tolerance $\epsilon_1 =10^{-2}$. The variable-step macro-process is enabled in this test. By varying $Tol$, the resulting speedup is given in Table~\ref{table:HMM_39} compared to a full EMT simulation using a $10 \mu s$ step size RK method. \color{black}Note that the micro-state from the HMM simulation results highly match the benchmark results, as observed from the small errors. Here the maximum error among all state variables is defined as the infinity norm $\left(\|\cdot\|_{\infty}\right)$ of the difference between the results simulated by the HMM approach and the benchmark. This metric provides a strict measure of convergence and highlights the accuracy of the proposed approach across all state variables at every simulation step. \color{black}For both S4 and S5, a large $Tol$ allows a larger average macro time step and thus achieves a faster speed. The simulation results of S4 with $Tol=10^{-2}$ and S5 with $Tol=10^{-2}$ are shown in Fig.~\ref{fig:39_S4} and Fig.~\ref{fig:39_S5}, respectively. \color{black}
For both contingencies, it is observed that the micro-dynamics become dominant when a fault occurs and immediately after it is cleared, and a large $H$ adaptively rising to 0.04 s is achieved without causing any numerical instability seen from Fig.~\ref{fig:S4step} and Fig.~\ref{fig:S5step}. It demonstrates that the proposed approach adaptively adjusts the macro-step size based on real-time system dynamics, ensuring an appropriate resolution of simulation while maintaining computational efficiency. Thus, it is recommended that a micro-state simulation, i.e., a full EMT simulation, is performed from the beginning of a fault until it is cleared in order to capture all fast EMT dynamics accurately under the fault. Then, when fast dynamics decay and slower dynamics become dominant, the HMM simulation may start to switch between the micro-state simulation and macro-state simulation.
\color{black}
The simulation is directly performed on the full EMT model without any explicit model reduction. The time step 0.04 s is already in the range of time steps for transient stability simulations while EMT dynamics are still available with the micro-state results during the intervals of $\eta$. 

\begin{table}[!ht]
\vspace{-2mm}
\begin{center}
\setlength{\tabcolsep}{2pt}
\caption{Computational Performance of the HMM Method}

\label{table:HMM_39}
     \begin{tabular}{c c c c c c}
        \toprule
\thead {Case }  
    & {\thead{$Tol$} }
        & {\thead{Execution\\Time(s) }}  & {\thead{Avg. Macro\\ Step-size }} & {\thead{Integral\\Error}}  & {\thead{Speedup}} \\ [0.5ex]
        \midrule
S4      & $10^{-2}$ & 91.9944 & 0.0248&$3.4171\times10^{-3}$&6.3766   \\
        & $7.5\times10^{-3}$ & 93.7706 & 0.0235& $1.5965\times10^{-3}$&6.2558 \\
        & $2.5\times10^{-3}$ & 110.5661 & 0.0149&$9.9184\times10^{-4}$&5.3055\\
S5      & $1\times10^{-1}$   &  46.2250 & 0.0257& $7.8574\times10^{-3}$ &6.5621  \\
        & $1\times10^{-2}$ &  49.8745 & 0.0203& $5.0552\times10^{-4}$ &6.0818\\
        & $1\times10^{-3}$ &  76.6401 & 0.0028&$5.5834\times10^{-5}$& 3.9579\\
        \bottomrule
    \end{tabular}
    \end{center}
    \vspace{-2mm}
\end{table}

\begin{figure}[!ht]
\centering
\hspace{-10mm}
\subfloat[]{\includegraphics[width=0.55\columnwidth]{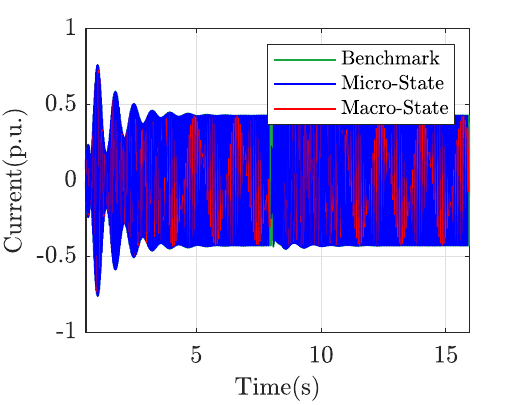}%
\label{fig:S4Current}}
\hspace{-5mm}
\subfloat[]{\includegraphics[width=0.55\columnwidth]{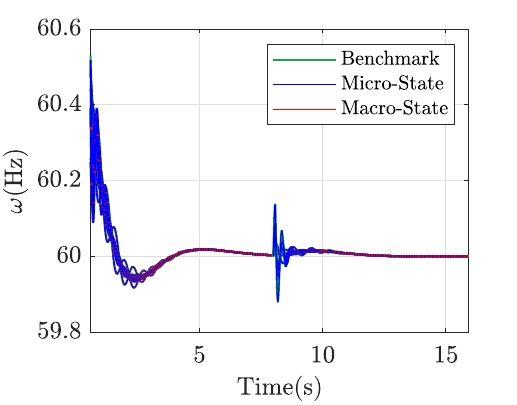}}%
\hspace{-10mm}

\color{black}
\hspace{-10mm}
\subfloat[]{\includegraphics[width=0.55\columnwidth]{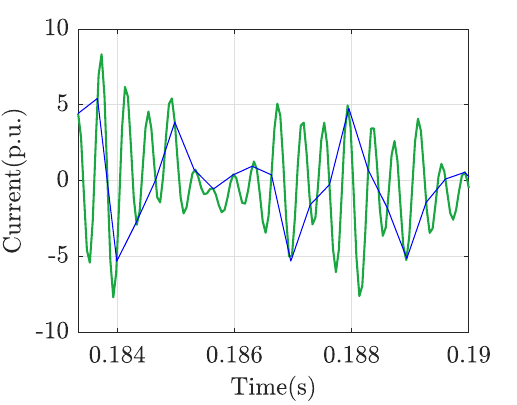}%
\label{fig:S4Current_zoomin3}}
\hspace{-5mm}
\subfloat[]{\includegraphics[width=0.55\columnwidth]{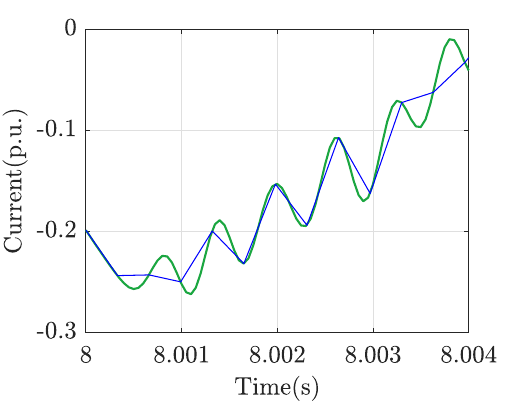}}%
\hspace{-10mm}

\hspace{-10mm}
\subfloat[]{\includegraphics[width=0.55\columnwidth]{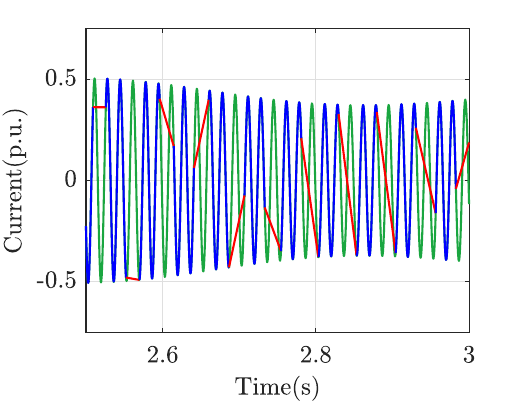}%
\label{fig:S4Current_zoomin1}}
\hspace{-5mm}
\subfloat[]{\includegraphics[width=0.55\columnwidth]{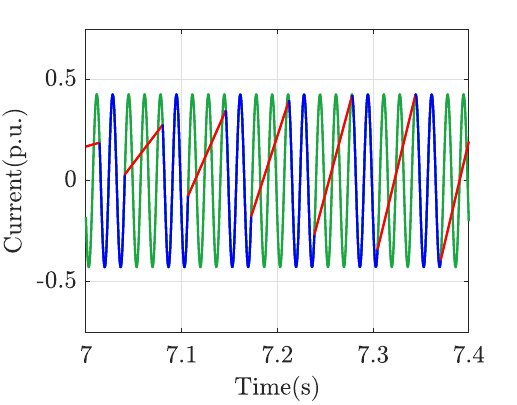}}%
\hspace{-10mm}

\hspace{-10mm}
\subfloat[]{\includegraphics[width=0.55\columnwidth]{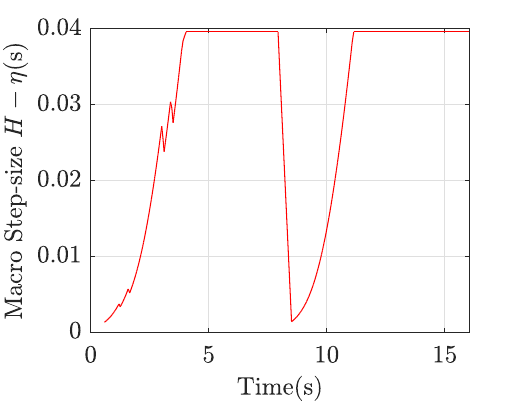}%
\label{fig:S4step}}
\hspace{-5mm}
\subfloat[]{\includegraphics[width=0.55\columnwidth]{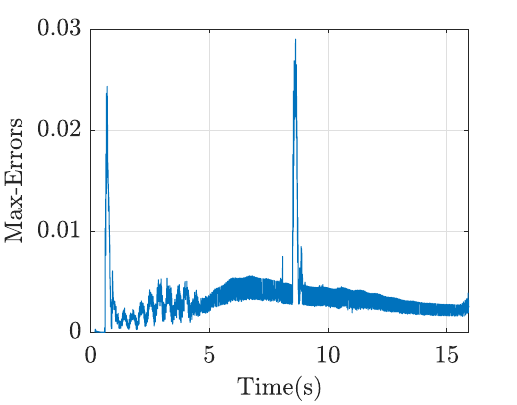}}%
\hspace{-10mm}

\caption{Simulation results for S4. (a) Phase A current of the branch line 16-24. (b) Rotor angular frequencies. (c) Zoomed-in view of (a) near the fault. (d) Zoomed-in view of (a) near the generator tripping. (e) Zoomed-in view of (a) under oscillation. (e) Zoomed-in view of (a) under steady state. (g) Macro step size during the simulation. (h) Maximum errors among all state variables. }
\label{fig:39_S4}
\color{black}
\end{figure}

\begin{figure}[!ht]
\centering
\hspace{-10mm}
\subfloat[]{\includegraphics[width=0.55\columnwidth]{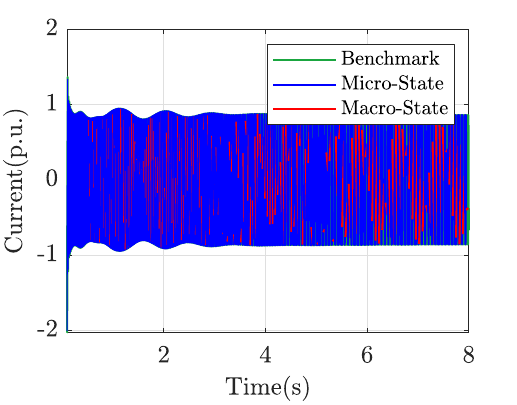}%
\label{fig:S5Current}}
\hspace{-5mm}
\subfloat[]{\includegraphics[width=0.55\columnwidth]{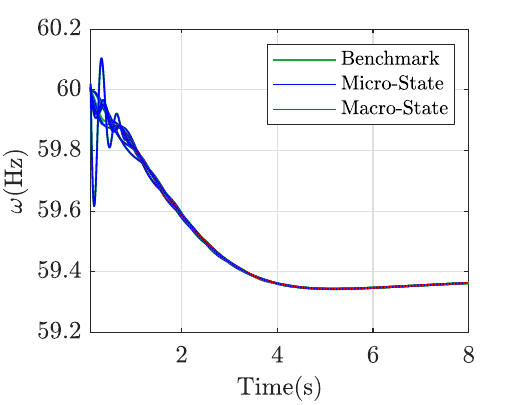}}%
\hspace{-10mm}

\color{black}
\hspace{-10mm}
\subfloat[]{\includegraphics[width=1.1\columnwidth]{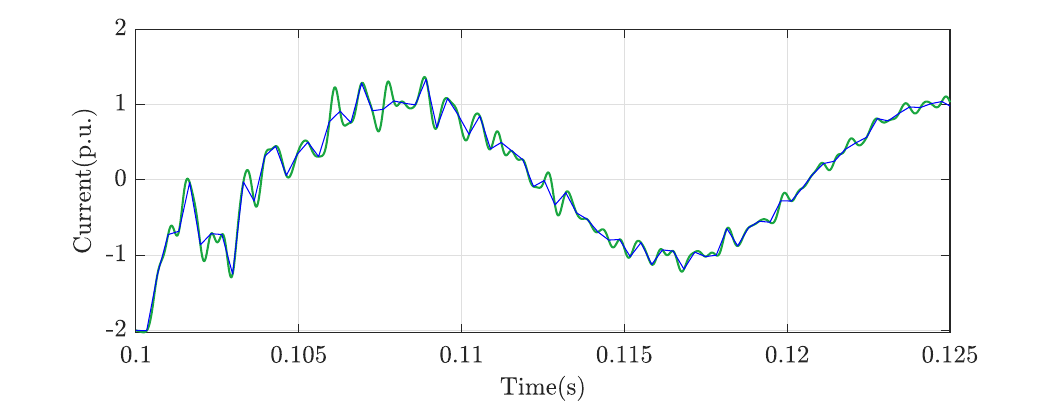}%
\label{fig:S5Current_zoomin3}}
\hspace{-10mm}

\hspace{-10mm}
\subfloat[]{\includegraphics[width=0.55\columnwidth]{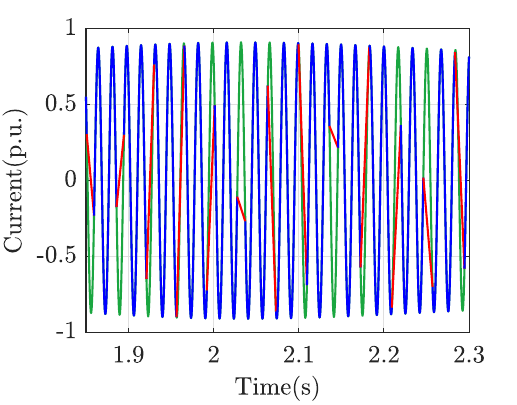}%
\label{fig:S5Current_zoomin1}}
\hspace{-5mm}
\subfloat[]{\includegraphics[width=0.55\columnwidth]{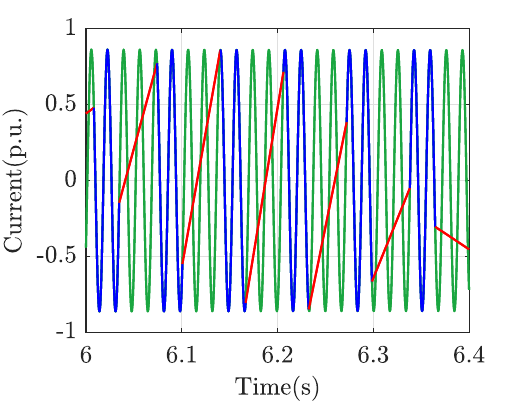}}%
\hspace{-10mm}

\hspace{-10mm}
\subfloat[]{\includegraphics[width=0.55\columnwidth]{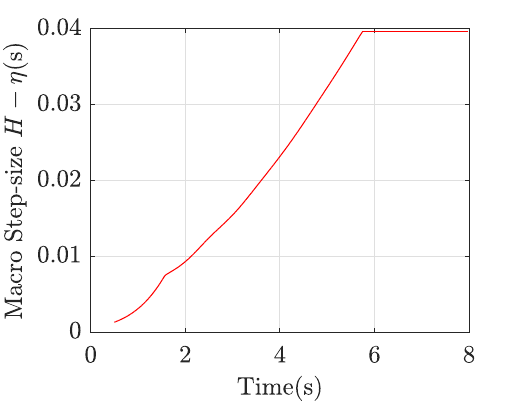}%
\label{fig:S5step}}
\hspace{-5mm}
\subfloat[]{\includegraphics[width=0.55\columnwidth]{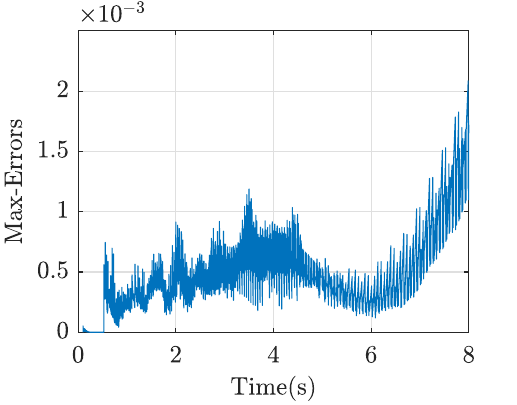}}%
\hspace{-10mm}

\caption{Simulation results for S5. (a) Phase A current of the branch line 17-18. (b) Rotor angular frequencies. (c) Zoomed-in views of (a) near the fault. (d) Zoomed-in view of (a) under oscillation. (e) Zoomed-in view of (a) under steady-state. (f) Macro step size during the simulation. (g) Maximum errors among all state variables. }

\label{fig:39_S5}
\end{figure}
\color{black}
\subsection{Test on the Large-scale System}
The performance of the HMM approach on the large-scale systems is further validated on a synthetic 390-bus system \cite{xiong2023semi}. To compare with commercial software such as PSCAD, the proposed algorithm is translated from its MATLAB code into C code, and then compiled as an executable MEX file. The PSCAD codes used in this section are complied with GFortran 8.1 (64-bit) in PSCAD 5.0.2. A new contingency is simulated, named S6: At $t=0.1$s a three-phase grounding fault is applied at bus 10 and then cleared after 0.2s, a 1 Hz forced oscillation happens due to the external three-phase current source injection $0.32\sin(2\pi t)$ [p.u.] (Phase A) at bus 30, and the simulation ends at $t=8$s. The HMM approach is compared with the nodal formulation-based approach \cite{Dommel1969} utilized in PSCAD and the RK45 method \cite{dormand1980family} utilized in Simulink. The simulation results are benchmarked with simulations using the nodal formulation-based approach utilized in PSCAD with a small step size $h = 1~\mu\text{s}$. The time costs for the simulation and maximum normalized errors among all state variables are presented in Table~\ref{table:errorcomp2}. Also, the phase-A voltage results at the grounding bus are shown in Fig~\ref{fig:390_S6_1} and Fig~\ref{fig:390_S6_2}. The results in Table~\ref{table:errorcomp2} highlight the advantages of the proposed HMM approach in terms of speed and accuracy, for a comparable error tolerance i.e., \text{Max Error} = 8.29\%, the HMM approach takes a CPU time of 176.59 s, whereas PSCAD requires 12725.77s using a 5 $\mu$s step size. This demonstrates that the HMM approach is approximately 70 times faster than PSCAD while achieving a similar accuracy level. Moreover, the RK45 diverges for a relaxed tolerance of \( \text{tol} = 10^{-1} \), indicating numerical instability in the stiff simulation. For a tighter tolerance (\( \text{tol} = 10^{-3} \)), RK45 has a maximum error of 43.60\%\ but at a much higher computational cost of 5246.13 s. In comparison, the HMM approach has a significantly lower error and an almost 30 times speedup compared to RK45. Note that PSCAD's time performance improves as the step size increases. However, this comes at the cost of accuracy, as larger step sizes (\(h = 25\mu \)s to \(h = 100\mu \)s) result in progressively larger errors. This can be observed from transients shown in Fig.~\ref{fig:390_S6_1}, only PSCAD with \(h = 5\mu\)s can match the benchmark in a good shape. The proposed HMM approach, however, utilizes a high-order approximation in the micro-process and provides accurate results for fast dynamics under large step sizes. Note that, if needed, detailed high-resolution micro-state dynamics can be reconstructed from the macro-process via the micro-solver. Also, the variable-step algorithm ensures that the step size of the macro-process is dynamically adjusted to maintain simulation accuracy. At the moment the fault happens, the step size of the macro-process becomes smaller to accommodate the need for higher resolution and accuracy, which is also shown in Fig.~\ref{fig:S6step}.  Then, the macro-step size gradually increases up to 0.04 s which can accurately simulate slow dynamics of the forced oscillation as shown in Fig.\ref{fig:S6Current} and Fig.~\ref{fig:S6Current_zoomin3}. The HMM approach achieves a balanced trade-off between computational speed and accuracy, outperforming traditional fixed-step methods, such as the nodal formulation method in PSCAD, and variable-step methods, such as the RK45 method in Simulink. The results demonstrate the superiority of the proposed HMM approach for accelerated, accurate two-timescale power system simulations including both fast dynamics and slow dynamics.



\color{black}
\begin{table}[!ht]
\vspace{-2mm}
\begin{center}
\setlength{\tabcolsep}{0pt}
\color{black}
\caption{Computational performance by Different Methods for 390-bus System}
\label{table:errorcomp2}
\begin{tabular}{lccc}
        \toprule
 Approach & CPU Time$(\mathrm{s})$ & Max Error(\%)\\
\midrule
HMM & 176.59  & 0.0829 \\
RK45 $(tol=10^{-1})$ & Diverged & Diverged\\
RK45 $(tol=10^{-3})$ &  5246.133 & 0.4360 \\
PSCAD $(h=5\mu s)$ & 12725.77 & 0.0884 \\
PSCAD $(h=25\mu s)$ & 2701.64 & 0.7832 \\
PSCAD $(h=50\mu s)$ & 1354.55 & 1.4712\\
PSCAD $(h=75\mu s)$ & 955.39 & 1.6350 \\
PSCAD $(h=100\mu s)$ & 649.97 & 1.4688\\
\bottomrule
\end{tabular}
    \end{center}
\end{table}

\vspace{-5mm}
\color{black}



\begin{figure}[!ht]
\centering

\color{black}
\hspace{-10mm}
\subfloat[]{\includegraphics[width=1.1\columnwidth]{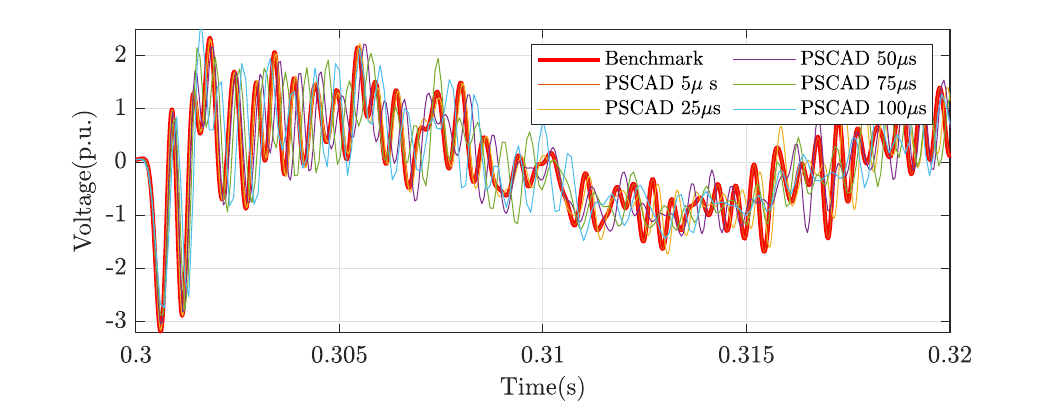}%
\label{fig:S6voltage_zoomin1}}
\hspace{-10mm}

\color{black}
\hspace{-10mm}
\subfloat[]{\includegraphics[width=1.1\columnwidth]{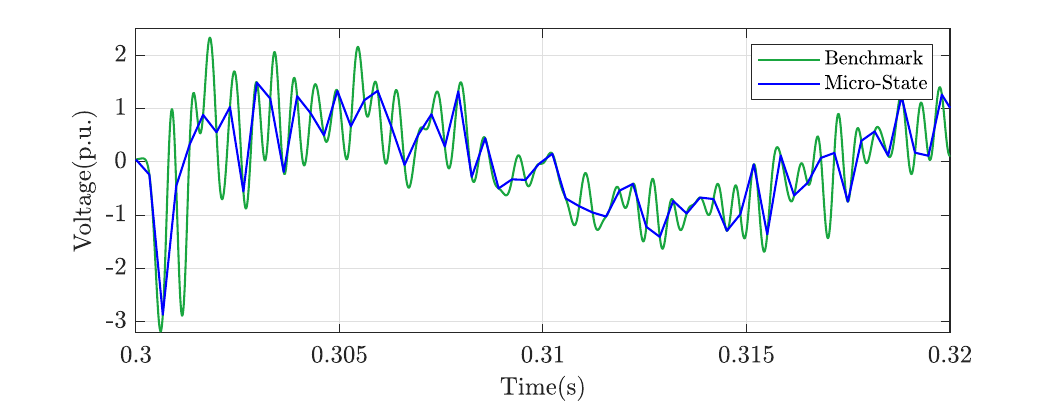}%
\label{fig:S6voltage_zoomin2}}
\hspace{-10mm}
\caption{Simulation results of Phase-A voltage at the grounding Bus for S6 after the fault. (a) results simulated by PSCAD using different step sizes. (b) results simulated by HMM. }
\vspace{-5mm}
\label{fig:390_S6_1}
\end{figure}

\begin{figure}[!ht]
\centering

\subfloat[]{\includegraphics[width=0.5\columnwidth]{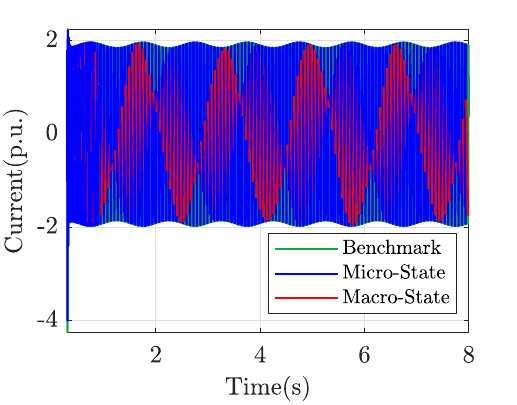}%
\label{fig:S6Current}}
\hspace{-5mm}
\subfloat[]{\includegraphics[width=0.5\columnwidth]{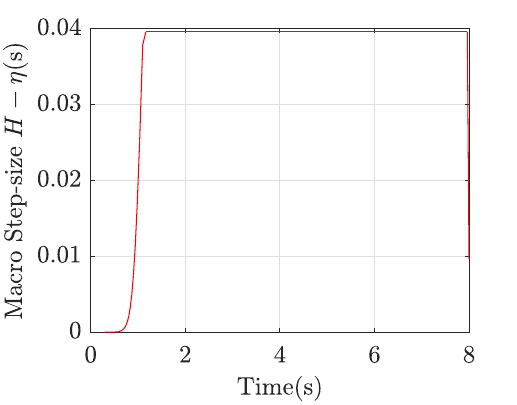}\label{fig:S6step}}%

\color{black}
\hspace{-10mm}
\subfloat[]{\includegraphics[width=1.1\columnwidth]{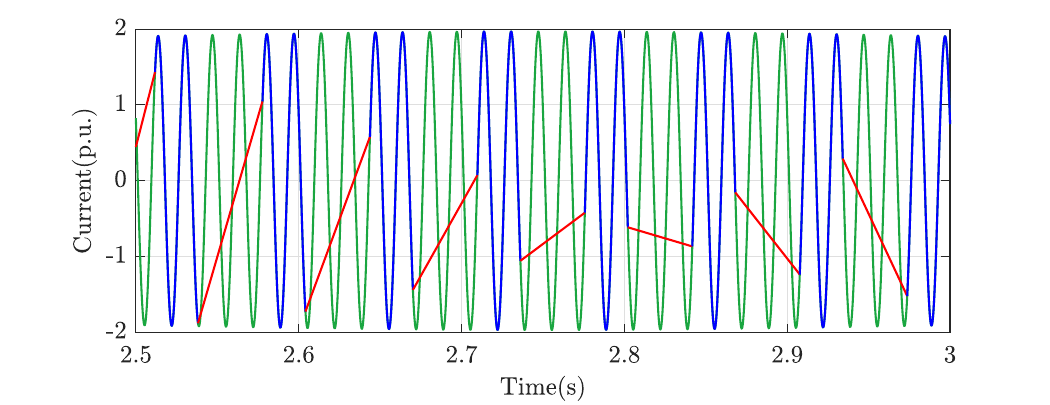}%
\label{fig:S6Current_zoomin3}}
\hspace{-10mm}
\caption{Simulation results for S6. (a) Phase A current of the branch line 1-2. (b) Macro step size during the simulation. (c) Zoomed-in views of (a). (d) Zoomed-in view of (a) near the steady state. }
\vspace{-5mm}
\label{fig:390_S6_2}
\end{figure}

\section{Conclusion}
This paper has proposed an HMM approach for efficient multi-timescale simulation of power systems with penetration of IBRs. On the EMT model of a power system, the HMM approach takes significantly enlarged time steps to evolve its macroscopic behavior for most of the simulation period when microscopic dynamics are either monotonic or predictable following a contingency. By leveraging a DT-based variable-step semi-analytical solver, the proposed HMM framework links detailed-level micro-state simulations over disconnected intervals via macro-state solutions to ensure numerical stability throughout the simulation, and it enables automatic zooming in and out between macroscopic and microscopic dynamics on the fly during the simulation. 
\color{black}The proposed HHM framework is applicable to any dynamical system with an available ODE model. Thus, the proposed HHM approach may be extended and enhanced to simulate additional timescales in future power systems with integrated IBRs, addressing more detailed simulations of inverter switching cycles. \color{black}
Case studies have demonstrated that the proposed HMM approach can be directly applied to the full EMT model of a power system and can more efficiently capture accurate fast dynamics while skipping over less important EMT details to speed up the simulation. 



\bibliographystyle{IEEEtran}



\begin{IEEEbiography}[{\includegraphics[width=1in,height=1.25in,clip,keepaspectratio]{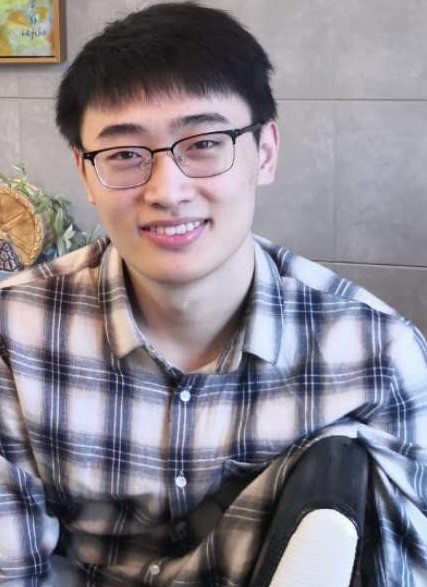}}]{Kaiyang Huang}
(Graduate Student Member, IEEE) received the
B.S. degree in electrical engineering
from North China Electric Power University, China, in 2020.
He is currently pursuing the Ph.D. degree at the
Department of Electrical Engineering and Computer
Science, University of Tennessee, Knoxville, USA.
His research interests include power system simulation, transient
stability analysis, and dynamics.  \end{IEEEbiography}

\begin{IEEEbiography}[{\includegraphics[width=1in,height=1.25in,clip,keepaspectratio]{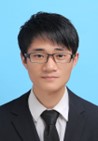}}]{Min Xiong}
(Member, IEEE) received the B.S. and M.S. degrees from Wuhan University, Wuhan, China, in 2013 and 2016, respectively, and the Ph.D. degree from the University of Tennessee, Knoxville, TN, USA, in 2023, all in electrical engineering. He was an engineer with State Grid Hubei Power Company from 2016 to 2019. He is now a postdoctoral researcher at the Power Systems Engineering Center, National Renewable Energy Laboratory, Golden, CO, USA. His research interests include electrical parameter measurement, relay protection, power system stability analysis, electromagnetic transient simulation, and integration of renewable resources.  \end{IEEEbiography}

\begin{IEEEbiography}[{\includegraphics[width=1in,height=1.25in,clip,keepaspectratio]{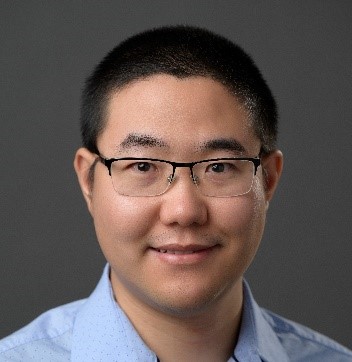}}]{Yang Liu}
(Member, IEEE)  is currently a principal engineer at Quanta Technology LLC. He received his B.S. degree in energy and power engineering from Xi’an Jiaotong University, China, in 2013, his M.S. degree in power engineering from Tsinghua University, China, in 2016, and his Ph.D. degree in electrical engineering from the University of Tennessee, Knoxville, USA, in 2021. He was a Post-Doctoral Researcher at the University of Tennessee in 2022 and at Argonne National Laboratory in 2023. His research interests include power system simulation, dynamics, stability, and control. \end{IEEEbiography}

\begin{IEEEbiography}[{\includegraphics[width=1in,height=1.25in,clip,keepaspectratio]{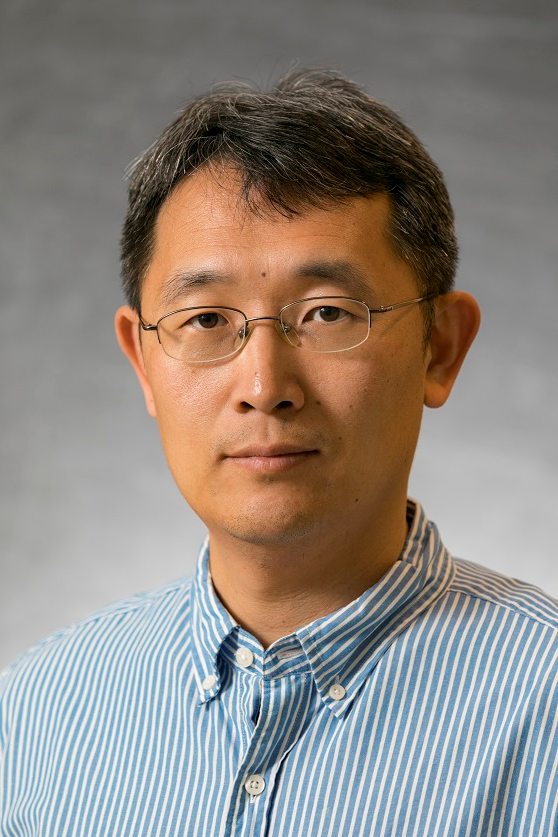}}]{Kai Sun}
(Fellow, IEEE) is a professor in the Department of Electrical Engineering and Computer Science at the University of Tennessee, Knoxville, USA. He received his B.S. degree in Automation in 1999 and his Ph.D. degree in Control Science and Engineering in 2004, both from Tsinghua University, Beijing, China. From 2007 to 2012, he served as a Project Manager in grid operations, planning, and renewable integration at the Electric Power Research Institute (EPRI), Palo Alto, CA. \end{IEEEbiography}

\vfill

\end{document}